\newtheorem{thm}{Theorem}[section]
\newtheorem{lem}{Lemma}[section]
\newtheorem{nas}{Corollary}[section]
\theoremstyle{remark}
\newtheorem{zau}{Remark}[section]
\theoremstyle{definition}
\newtheorem{ozn}{Definition}[section]
\DeclareMathOperator{\M}{\mathbb E}
\DeclareMathOperator{\Dom}{Dom}
\newcommand{\si}{\sigma}
\newcommand{\wf}{\widetilde{f}}
\newcommand{\ws}{\widetilde{\sigma}}
\newcommand{\os}{\overline{\sigma}}
\begin{document}


\title[Option pricing under stochastic volatility]
	{Application of Malliavin calculus to exact and approximate option pricing under stochastic volatility}

\author{S. Kuchuk-Iatsenko}
\address{Taras Shevchenko National University of Kyiv}
\email{kuchuk.iatsenko@gmail.com}	
	
\author{Y. Mishura}
\address{Taras Shevchenko National University of Kyiv}
\email{myus@univ.kiev.ua}


\author{Y. Munchak}
\address{Taras Shevchenko National University of Kyiv}
\email{yevheniamunchak@gmail.com}

\subjclass[2000]{Primary 91B25; 91G20; Secondary 60H07}
\keywords{Black-Scholes model, stochastic volatility, option pricing, Malliavin calculus}

\selectlanguage{english}
\begin{abstract}
The article is devoted to models of financial markets with stochastic volatility, which is defined by a functional of Ornstein-Uhlenbeck process or Cox-Ingersoll-Ross process. We study the question of exact price of European option. The form of the density function of the random variable, which expresses the average of the volatility over time to maturity is established using Malliavin calculus.The result allows calculate the price of the option with respect to minimum martingale measure when the Wiener process driving the evolution of asset price and the Wiener process, which defines volatility, are uncorrelated.
\end{abstract}

\maketitle

\section{Introduction}
An exact and approximate option pricing in the models with stochastic volatility has been extensively studied during last decades. There are many factors which stimulate such investigations, and probably the major of them are desire to enhance the classic Black--Scholes model and rapid increase in computational resources. Among the fundamental works in this field one might outline \cite{Heston1993}, \cite{HullWhite}, \cite{Steins}, \cite{Wiggins}. Authors of the above papers consider price of an option as a solution of partial differential equation with respect to two variables, price of an asset and volatility, which was derived in
\cite{Garman}. In
\cite{HullWhite} the approximate price of European option is derived in series form taking into account the distribution of an asset price conditional on average value of volatility. Authors of
\cite{Heston1993} and \cite{Steins} determine analytical formulas for the price of European option by application of inverse Fourier transform for the case when there is no correlation between the asset price process and the volatility process. On the contrary to these approaches the finite difference method is applied in \cite{Wiggins} in order to solve the aforementioned differential equation, which allows to investigate the problem in the most general setting.

The results obtained in the aforementioned works have paved the way for new studies. Thus, the inverse Fourier transform approaches in different variations still remain a widely used tool in determination of analytical formulas for option prices in various models. Among the later works dealing with the matters of derivation of exact and approximate formulas for option prices one might outline \cite{NicolatoVenardos}, where an analytical expression for the price of an option is derived for the class of non-Gaussian models with stochastic volatility driven by Ornstein--Uhlenbeck process (see also \cite{BarndorffShephard}). One might also mention the results obtained in the works  \cite{DIppolitiMoretto}, \cite{Goard}, \cite{SKI_YSM-1}, \cite{PerSirMas}, where authors determine exact or approximate formulas of option prices in various models. Thus, in \cite{DIppolitiMoretto} the diffusion models with jumps are considered for which the inverse Fourier transform is applied to derive the analytical expression for the option price. In \cite{PerSirMas} the Hermite polynomials techniques are deployed in order to obtain the formula for approximate option price for the model in which the price of an asset is given by geometric Brownian motion and the volatility is exponential function of Ornstein--Uhlenbeck process. The similar model is considered in \cite{SKI_YSM-1}: volatility is assumed to be some function of Ornstein--Uhlenbeck process. Under assumption that the asset price process and volatility process are uncorrelated the inverse Fourier transform is applied to obtain the analytical formula for European option price.  In \cite{SKI_YSM-2} the Euler--Maruyama discretization approach to derivation of approximate option price for the similar model is suggested and the rate of convergence of approximate price to the true price is estimated. Authors of \cite{mish-munch-2} investigate the rate of convergence of option prices when the asset prices in discrete-time models converge weakly to the Black--Scholes price. The discrete approximation scheme of the asset prices modeled by the geometric Ornstein--Uhlenbeck process is considered and rate of convergence of fair option prices is derived in \cite{mish-munch-3}. In \cite{Goard} the author applies Lie symmetry methods to the problem of solution of the aforementioned partial differential equations in the Heston 3/2 model, that is the model in which the volatility is the solution of stochastic differential equation
$dY_t=Y_t(y-\alpha Y_t)dt+kY^{3/2}_tdW_t.$ One might find great deal of information about the investigations of financial market models with stochastic volatility, for example in \cite{ShephardAndersen}.

The main problem about the exact pricing of an option is that the option price is a function of integral functional argument which depends on the trajectory of the volatility process. The distribution of this integral functional is generally unknown. However it appears that Malliavin calculus techniques may be applied to determine the probability density function of integral functional of stochastic volatility, and this is what we study in this paper.

Application of Malliavin calculus to financial mathematics has spread widely after the approach to construction of optimal portfolios of assets by means of Clark--Ocone formula was presented in \cite{OconeKaratzas}. In \cite{Fournie}  Malliavin calculus framework is used to derive formulas for so called ``Greeks'' -- the quantities representing the sensitivity of the price of options to a change in underlying parameters on which the value of an instrument or portfolio of financial instruments is dependent.
Such application of Malliavin calculus to financial mathematics remains one of the most popular. However other applications are also developed (see \cite{Nualart}, \cite{Sanz},
\cite{NunnoOksProske} and references therein).

This paper is structured as follows: Section 2 introduces general setting of the Black--Scholes models with stochastic volatility driven by Ornstein--Uhlenbeck or Cox--Ingersoll--Ross processes. The results concerning no arbitrage properties of the models and expressions for the price of European option are presented.
Some fundamental concepts of Malliavin calculus along with preliminary results are covered in Section 3 followed by the main result of this work -- the theorem about probability density function of the distribution of average volatility. The expression for the price of European option in terms of this probability density function is presented thereafter.  Section 4 contains proofs of some auxiliary results, such as, expressions of stochastic derivatives of functionals of stochastic volatility.

\section{Some properties of the Black--Scholes model with stochastic volatility driven by some function of Ornstein--Uhlenbeck or Cox--Ingersoll--Ross processes. Expression for the price of European option}

\subsection{Some properties of the Black--Scholes model with stochastic volatility driven by a function of Ornstein--Uhlenbeck process}

Let $\{\Omega, \mathcal{F}, \mathbf{F}=\{\mathcal{F}_t^{(W,\widetilde{W})},$ $t\geq 0\}, \mathbb{P}\}$  be complete probability space with filtration generated by Wiener processes $\{W_t,$ $\widetilde{W}_t,$ $0 \leq t \leq T\}.$ We consider the model of the market where one risky asset and one risk-free asset, and the price of the latter at the moment of time $t$ is given by  $B_t=\operatorname{e}^{rt},$ where $r>0.$  The price of risky asset evolves according to the geometric Brownian motion $\{S_t,\;0 \leq t \leq T\} $ and its volatility is given by a measurable function of another stochastic process. In this paper we assume that the latter process is either Ornstein--Uhlenbeck process or Cox--Ingersoll--Ross process. This subsection deals with the first case. More precisely, the market is described by the pair of stochastic differential equations, the first of which is linear with respect to the asset price and the second one is of Langevin type:
\begin{equation}\label{ModelA0}
dS_t   =   \mu S_tdt+\sigma(Y_t)S_tdW_t,
\end{equation}
\begin{equation}\label{ModelA1}
dY_t   =  -\alpha Y_tdt+kd\widetilde{W}_t.
\end{equation}

Denote $S_0 $ and $Y_0$ deterministic initial values of the processes specified by equations \eqref{ModelA0}--\eqref{ModelA1}, respectively. Let
 $$\overline{X}_t=(1,X_t)=(1,e^{-rt}S_t)$$ be a vector of discounted prices of assets.

In order to reduce the level of technical complexity of the model we impose the following restrictions:

\begin{itemize}

\item [(A1)]  Wiener processes   $W$ and $\widetilde{W}$ are uncorrelated, and hence, independent;

\item [(A2)] the volatility function $\sigma : \mathbb{R}\rightarrow \mathbb{R}_+$   is measurable, bounded away from zero by a constant and has at most polynomial growth, that is
    $c \leq \sigma(x) \leq q(1+|x|^l)$ for all  $x\in\mathbb{R}$
    and some positive constants $c, q$ and  $l\in \mathds{N}$.

\item [(A3)] the coefficients $\alpha  $ and $k$  are positive.

\end{itemize}

The solution of equation \eqref{ModelA0} is of the form
\begin{equation*}
S_t=S_0\exp{\left(\mu t - \frac{1}{2}\int^t_0\sigma^2(Y_s)ds+\int^t_0\sigma(Y_s)dW_s\right)}.
\end{equation*}

Corresponding discounted asset is
\begin{equation*}
X_t=S_0\exp{\left((\mu-r) t - \frac{1}{2}\int^t_0\sigma^2(Y_s)ds+\int^t_0\sigma(Y_s)dW_s\right)}
\end{equation*}
which satisfies the following equation:
\begin{equation}\label{discount}
dX_t=  (\mu-r)dt +   \sigma(Y_s)X_sdW_s.
\end{equation}
Equation \eqref{discount} yields that the discounted price process can be represented by $X_t=S_0+M_t+A_t, $ where $M_t=\int^t_0\sigma(Y_s)X_sdW_s$ is a continuous local martingale,
$A_t=(\mu-\sigma)t$ is a continuous process with finite variation.
  The Ornstein--Uhlenbeck process given by \eqref{ModelA1} is a convenient tool for the purpose of modeling volatility on financial markets due to its mean-reversion property. This process is Gasussian with the following characteristics:
\begin{eqnarray*}
\mathbb{E}[Y_t]=Y_0\operatorname{e}^{-\alpha t}, \quad \operatorname{Var}[Y_t]=\frac{k^2}{2\alpha}(1-\operatorname{e}^{-2\alpha t}).
\end{eqnarray*}

Moreover, the Ornstein--Uhlenbeck process is Markov and admits the explicit representation:

\begin{equation*}
\label{OU}
Y_t=Y_0 \operatorname{e}^{-\alpha t}+k\int_{0}^t \operatorname{e}^{-\alpha (t-s)}d\widetilde{W}_s.
\end{equation*}

\subsection{Some properties of the Black--Scholes model with stochastic volatility driven by Cox--Ingersoll--Ross process}
Now we consider the model of the market where one risky asset and one risk-free asset, and the price of the latter at the moment of time $t$ is given by
$B_t=\operatorname{e}^{rt},$ where $r>0.$ The price of risky asset is given by the following pair of stochastic differential equations
\begin{equation}\label{ModelB0}
dS_t   =   \mu S_tdt+\sqrt{Z_t}S_tdW_t,
\end{equation}
\begin{equation}\label{ModelB1}
dZ_t=(b-Z_t)dt+k\sqrt{Z_t}d\widetilde{W}_t.
\end{equation}
Denote $Z_0> 0 $   deterministic initial values of the process given by \eqref{ModelB1}.
Let condition $(A1)$ hold along with the following additional condition:
\begin{itemize}

\item [(A3')] coefficients $b$  and $k$  are positive and $k^2< 2b$.

\end{itemize}
The solution of equation \eqref{ModelB0} is of the form
$$
S_t=S_0\exp{\left(\mu t - \frac{1}{2}\int^t_0Z_sds+\int^t_0 \sqrt{Z_s}dW_s\right)}.
$$
Cox--Ingersoll--Ross process given by \eqref{ModelB1} has the following characteristics
$$\mathbb{E}[Z_t]=Z_0 \operatorname{e}^{-t}+b(1- \operatorname{e}^{-t}),$$
$$\operatorname{Var}[Z_t]=Z_0 k^2( \operatorname{e}^{-t}- \operatorname{e}^{-2t})+\frac{bk^2}{2}\left(1- \operatorname{e}^{-t}\right)^2.$$

According to \cite{Cox_Ing_Ross} the condition $k^2< 2b$ is necessary and sufficient for the process $Z$ to attain positive values and not to hit zero. Throughout the paper we suppose this condition to be fulfilled. Model \eqref{ModelB0}--\eqref{ModelB1} is called the Heston model.

\subsection{Absence of arbitrage, incompleteness and equivalent martingale measures in the model with stochastic volatility driven by a function of Ornstein--Uhlenbeck process}

The question of absence of arbitrage in model \eqref{ModelA0}--\eqref{ModelA1} is crucial for the problem of option pricing. It is investigated in detail in  \cite{SKI_YSM-1},
and only key results are mentioned here. It is well-known that there are several definitions of absence of arbitrage for semimartingale models in continuous time. They are covered in detail in \cite{DelbSchach} and \cite{Shiryaev} and differ by, for example, classes of admissible trading strategies. We consider the notion of absence of arbitrage in $\overline{{NA}_{g}}$ sense (\cite{DelbSchach,   Shiryaev}),  that is for the case when the class of admissible trading strategies consists of such self-financing strategies that maximum loss or debt over the portfolio at any moment of time $t \in [0,T]$ is bounded from below by the following scalar product:
$(\overline{g}, \overline{X}_t),$ where $\overline{g}$ is some vector with positive components and $\overline{X}_t$ is a vector of discounted prices of assets traded on the market. Naturally the absence of arbitrage is connected with existence of martingale measures.
\begin{ozn}\label{EMMeas}
A probability measure $\mathbb{Q},$ which is equivalent to the objective measure $\mathbb{P},$ is called an equivalent martingale measure if the discounted price process is a martingale under the measure $\mathbb{Q}.$
\end{ozn}
According to theorem 2, \cite[p. 653]{Shiryaev} the existence of a martingale measure yields the absence of arbitrage of our model in $\overline{{NA}_{g}}$ sense.

Hence, due to classical Girsanov theorem the set of martingale measures is the subset of the set of measures which have Radon--Nikodym derivative restriction on $\mathcal{F}_t$ of the following form:
\begin{equation}
\label{Lt}
\begin{gathered}
\frac{d\mathbb{Q}}{d\mathbb{P}}\Big|_{\mathcal{F}_t}= \exp \Big( \int_0^t (r-\mu)/\sigma(Y_s)dW_s+\int_0^t \nu_sd\widetilde{W}_s \\
-\frac{1}{2}\int_0^t ((r-\mu)^2/\sigma^2(Y_s)+\nu^2_s)ds\Big),
\end{gathered}
\end{equation}
where    $\nu=(\nu_t)_{0\leq t \leq T}$ is progressively measurable process for which  $\int^T_0
\nu_s^2ds<\infty$ $\mathbb{P}-$a.s.. It is obvious that under condition (A2) of boundedness away from zero of volatility and under assumption that the process  $\nu$ is bounded
all set functions $\mathbb{Q}$ having Radon--Nikodym derivative of the form  \eqref{Lt} define martingale measures.
 As there are more than one martingale measure the market is incomplete.
The pair of processes $(S_t, Y_t)$ have the following representation with respect to the equivalent martingale measure $\mathbb{Q}$ with Radon--Nikodym derivative  \eqref{Lt}:
\begin{equation*}\label{ModelQ}
\begin{array}{rcl}
dS_t & = & r S_tdt+\sigma(Y_t)S_tdW^\mathbb{Q}_t, \\
dY_t & = & \left( -\alpha Y_t-k \nu(t)\right) dt+kd\widetilde{W}^\mathbb{Q}_t,
\end{array}
\end{equation*}
where according to two-dimensional Girsanov theorem (see, for example, Theorem 5.4.1, \cite{Shreve}) the processes
\begin{align*}
W^\mathbb{Q}_t & = W_t+\int_0^t\dfrac{\mu-r}{\sigma(Y_s)}ds, \\
\widetilde{W}^\mathbb{Q}_t & = \widetilde{W}_t+\int_0^t\nu(s)ds,
\end{align*}
are independent Wiener processes with respect to $\mathbb{Q}$.

Obviously among all measures given by  \eqref{Lt} the simplest form has the one having $\nu(s)\equiv 0$. At the same time according to Theorem 5.1 from
\cite{SKI_YSM-1} this measure is minimal martingale measure in the sense of the following definition.

\begin{ozn} Let a price of discounted asset on a financial market is $\mathbb{P}$-semimartingale  $X$ which has a representation $X=X_0+M+A,$ where $M$ is local
$\mathbb{P}$-martingale, $A$ is an adapted process with finite variation. A martingale measure $\mathbb{Q}$ which is equivalent to the objective measure $\mathbb{P},$ is called a minimal martingale measure (MMM) if $\mathbb{Q}=\mathbb{P}$ on $\mathcal{F}_0,$ and any square-integrable $\mathbb{P}$-martingale strictly orthogonal to the process $M$, is a local $\mathbb{Q}$-martingale.
\end{ozn}

Notice that according to what is given above the components of decomposition in our model are $X_0=S_0,\  M_t=\int_0^t \sigma(Y_s)X_sdW_s,\  A_t= (\mu-r)t.$
Below we will study the option prices with respect to minimal martingale measure. With respect to such measure (denote it  $\mathbb{Q}$) equations
\eqref{ModelA0}--\eqref{ModelA1} gain the following form (see. Section 5, \cite{SKI_YSM-1}):
\begin{equation}\label{ModelB}
\begin{array}{rcl}
dS_t & = & r S_tdt+\sigma(Y_t)S_tdW^\mathbb{Q}_t, \\
dY_t & = &-\alpha Y_tdt+kd\widetilde{W}^\mathbb{Q}_t,
\end{array}
\end{equation}
where stochastic processes
\begin{align*}
W^\mathbb{Q}_t & = W_t+\int_0^t\dfrac{\mu-r}{\sigma(Y_s)}ds, \\
\widetilde{W}^\mathbb{Q}_t & = \widetilde{W}_t,
\end{align*}
are independent Wiener processes with respect to measure $\mathbb{Q}.$

\subsection{Absence of arbitrage, incompleteness and equivalent martingale measures in the model with stochastic volatility driven by Cox--Ingersoll--Ross process}

  In this model the set of martingale measures is the subset of the set of measures which have Radon--Nikodym derivative restriction on $\mathcal{F}_t$ of the following form:
\begin{equation}
\label{Lt_1}
\begin{gathered}
\frac{d\mathbb{Q}}{d\mathbb{P}}\Big|_{\mathcal{F}_t}= \exp\left\{\int\limits_0^t\frac{r-\mu}{\sqrt{Z_s}}dW_s+\int\limits_0^t\nu_{1,s}d\widetilde{W}_s
 -\frac{1}{2}\int\limits_0^t\left( \frac{(r-\mu)^2}{Z_s}+\nu_{1,s}^2\right)ds\right\},
\end{gathered}
\end{equation}
where    $\nu_1=(\nu_{1,t})_{0\leq t \leq T}$ is a progressively measurable process for which  $\int^T_0\nu_{1,s}^2ds<\infty$ $\mathbb{P}-$a.s..  In order to prove the absence of arbitrage on the market set $\nu_{1,s}=0$. Then \eqref{Lt_1} yields
\begin{equation}
\label{Lt_11}
\frac{d\mathbb{Q}}{d\mathbb{P}}\Big|_{\mathcal{F}_t}= L_{1,t}:=
\exp\left\{\int\limits_0^t\frac{r-\mu}{\sqrt{Z_s}}dW_s-\frac{1}{2}\int\limits_0^t\left(\frac{(r-\mu)^2}{Z_s}\right)ds\right\},
\end{equation}
and according to Theorem 3.6 and Corollary 3.3 in \cite{Wong} we have  $\M[L_{1,t}]=1$ and the discounted price process
$$X_t=\exp\left\{\int\limits_0^t\sqrt{Z_s}d\widetilde{W}_s^{\mathbb{Q}}-\frac{1}{2}\int\limits_0^tZ_sds\right\}$$
 is $\mathbb{Q}$-martingale and market has no-arbitrage property.The pair of processes $(S_t, Y_t)$ have the following representation with respect to the equivalent martingale measure $\mathbb{Q}$ with Radon--Nikodym derivative \eqref{Lt_11}:
\begin{equation}\label{ModelZ}
\begin{array}{rcl}
dS_t & = & r S_tdt+\sqrt{Z_t}S_tdW^\mathbb{Q}_t, \\
dZ_t & = & \left( b-Z_t\right) dt+k\sqrt{Z_t}d\widetilde{W}^\mathbb{Q}_t,
\end{array}
\end{equation}
where according to two-dimensional Girsanov theorem processes
\begin{align*}
W^\mathbb{Q}_t & = W_t+\int_0^t\dfrac{\mu-r}{\sqrt{Z_s}}ds, \\
\widetilde{W}^\mathbb{Q}_t & = \widetilde{W}_t,
\end{align*}
are independent Wiener processes with respect to measure $\mathbb{Q}$. Similarly to the previous subsection this measure is minimal martingale measure.

\subsection{European option price as a function of volatility in the model with sto\-- chastic volatility}\label{OpPrice}

Denote $V_C$ a price at the initial moment of time of European call option $C=(S_T-K)^+$ with strike price $K\geq 0$ in model \eqref{ModelB}. This price is given by the following expression:
\begin{equation}\label{EV_0}
V_C=\operatorname{e}^{-rT}\mathbb{E}^{\mathbb{Q}}\{(S^{\mathbb{Q}}_T-K)^{+}\}=
\operatorname{e}^{-rT}\mathbb{E}^{\mathbb{Q}}\{\mathbb{E}^{\mathbb{Q}}\{(S^{\mathbb{Q}}_T-K)^{+}|Y_s, 0 \leq s\leq T\}\}.
\end{equation}

The inner expectation is conditional with respect to the trajectory  $\{Y_s, 0 \leq s \leq T\},$ and thus is the Black--Scholes price in the model with deterministic time-dependent volatility. According to Lemma 2.1 in \cite{paper4} the inner expectation in \eqref{EV_0}, denote it $E(\bar{\sigma})$,  has the following representation:
\begin{gather}\label{black-scholes}
\notag
E(\bar{\sigma}):=\mathbb{E}^{\mathbb{Q}}\{(S^{\mathbb{Q}}_T-K)^{+}|Y_s, 0 \leq s\leq T\}\\
\label{IntExp}
=S_0\operatorname{e}^{rT}\Phi\left(\frac{\ln S_0+(r+\frac{1}{2}\bar{\sigma}^2)T-\ln K}{\bar{\sigma}\sqrt{T}}\right)\\
\notag
-K\Phi\left(\frac{\ln S_0+(r-\frac{1}{2}\bar{\sigma}^2)T-\ln K}{\bar{\sigma}\sqrt{T}}\right),
\end{gather}
where $\bar{\sigma}:=\left(\dfrac{1}{T}\int_0^T\sigma^2(Y_s)ds\right)^{\frac{1}{2}},\;\Phi(\cdot)$ is cumulative distribution function of standard normal distribution. The function
$\bar{\sigma}$ may be viewed as an averaged volatility for the period of time from initial moment until maturity.
Formula \eqref{IntExp} evidences that the option price in Black--Scholes model with stochastic volatility is completely determined by the distribution of random variable $\bar{\sigma}$.

Similarly, the price of European call option  in model \eqref{ModelZ} can be derived by replacing $\bar{\sigma}$ with random variable $\widetilde{\sigma}:=\left(\dfrac{1}{T}\int_0^T Z_sds\right)^{\frac{1}{2}}$.

\section{Stochastic derivative and option price} Now we apply Malliavin calculus and particularly the notion of stochastic derivative to find an expression for probability density functions of the random variables
$\bar{\sigma}$ and $\widetilde{\sigma}$.
\subsection{Malliavin calculus. Probability density function of a random variable as a functional of stochastic derivative}
We begin by recalling necessary definitions and stating the proposition about probability density function of random variable being a functional of stochastic derivative. The fundamentals and applications of Malliavin calculus are given in more detail in \cite{Nualart}.

Let   $W=\{W(t), t \in [0,T]\},$ be a Wiener process on a probability space $\{\Omega, \mathcal{F},$  $\mathbf{F}=\{\mathcal{F}_t^{W}, t \in [0,T],
\mathbb{P}\},$ where $\Omega=C([0,T], \mathbb{R}).$

Denote $\widehat{C}^{\infty} \mathbb(R)$ a set of all infinitely differentiable functions having derivatives of at most polynomial growth on infinity.

\begin{ozn}
Smooth random variable is a random variable $F$ of the form $F=f(W(t_1),$ $\ldots , W(t_n)),$ $f=f(x^1,\ldots,x^n) \in \widehat{C}^{\infty}(\mathbb{R}^n),$ $ t_1, \ldots
t_n \in [0,T].$ We denote by $\mathcal{S}$ the class of smooth random variables.
\end{ozn}

\begin{ozn}
Let $F \in \mathcal{S}.$ Stochastic derivative of a random variable $F$ at point  $t$ is the following random variable:
\begin{equation*}
D_tF= \sum_{i=1}^n \frac{\partial f}{\partial x^i}(W(t_1), \ldots , W(t_n)) 1_{[0,t_i]}(t), \quad t\in[0,T].
\end{equation*}
\end{ozn}

The domain of the derivative operator $D:L^2(\Omega) \rightarrow L^2([0,T], \mathbb{R})$ is a Hilbert space  $\mathbb{D}^{1,2}$ of random variables. The scalar product on  $\mathbb{D}^{1,2}$ is defined as follows:
$$
\langle F, G \rangle _{1,2}=\mathbb{E}(FG)+\mathbb{E}(\langle DF, DG \rangle _H), \quad H=L^2[0,T].
$$

The space $\mathbb{D}^{1,2}$ is a dense subset of $L^2(\Omega)$ and a closure of the class of smooth random variables $\mathcal{S}$ with respect to the norm
$$
||F||_{1,2}=[\mathbb{E}(|F|^2)+\mathbb{E}(||DF||^2_H)]^{1/2}.
$$

Hence the derivative operator $D$ is closable, unbounded and is defined on a dense subset of the space $L^2(\Omega)$ (see \cite{Nualart}).

\begin{ozn}\label{ISDef}
Denote by $\delta$ the adjoint of the operator $D$ which is unbounded operator in $L^2([0,T], \mathbb{R})$ with values in $L^2(\Omega)$ such that:
\begin{itemize}
\item[$(i)$] the domain of $\delta$ is the set of square-integrable random variables $u\in L^2([0,T], \mathbb{R})$ such that
$$
\left|\mathbb{E}\left(\left\langle DF, u\right\rangle_H \right)\right|\le C(\mathbb{E}( F^2))^{1/2},
$$
for all $F\in\mathbb{D}^{1,2}$, where $C$ is some constant depending on $u$;
\item[$(ii)$] if $u$ belongs to the domain of $\delta$, then  $\delta(u)$ is the element of $L^2(\Omega)$ characterized by
$$
\mathbb{E}\left(F\delta(u)\right)=\mathbb{E}\left( \left\langle DF, u\right\rangle_H\right)
$$
for any $F\in\mathbb{D}^{1,2}$.
\end{itemize}
\end{ozn}
The operator $\delta$ is closed as the adjoint of an unbounded and densely defined operator. Denote its domain by $\Dom\delta.$

Consider the space $L^{1,2}=L^2([0,T],\mathbb{D}^{1,2})$ with the norm $||\cdot||_{L^{1,2}},$ where
\begin{equation*}
||u||^2_{L^{1,2}}=\mathbb{E}\left(\int_0^T u^2_tdt+\int_0^T\int_0^T \left(D_s u_t \right)^2 dtds\right).
\end{equation*}
\begin{zau}\label{ISL12}
If $u \in L^{1,2},$ then the integral $\delta(u)$ is well defined and the following inequality holds:
$$\mathbb{E}\left(\int_0^T u_t dW_t\right)^2\leq ||u||^2_{L^{1,2}}$$
(see \cite{NuaPar},\cite{Ouknine}). In this case the operator  $\delta(u)$ is called a Skorohod integral of the process $u$ and is denoted by
$$\delta(u)=\int\limits_0^T u_t dW_t.$$
\end{zau}

 The following proposition is crucial for the proof of main result of this paper.
\begin{lem}\textbf{(Proposition 2.1.1 from \cite{Nualart})}
Let $F$ be a random variable from $\mathbb{D}^{1,2}$. Assume that $\frac{DF}{\left\|DF\right\|_H^2}$ belomgs to the domain of the operator $ {\delta}.$ The the probability density function of the random variable $F$ is continuous, bounded and admits the following representation:
$$p(x)=\M \left[1_{\left\{F>x\right\}}\delta\left(\frac{DF}{\left\|DF\right\|_H^2}\right)\right].
$$
\end{lem}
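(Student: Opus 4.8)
The plan is to derive the representation from the Malliavin integration-by-parts formula applied to a well-chosen test function, and then to recover the density by a Fubini argument. First I would establish the basic integration-by-parts identity. For any continuously differentiable $\varphi$ with bounded derivative, the chain rule of Malliavin calculus gives $\varphi(F)\in\mathbb{D}^{1,2}$ together with $D(\varphi(F))=\varphi'(F)\,DF$. Pairing this in $H$ with $u:=DF/\|DF\|_H^2$ and using $\langle DF,DF\rangle_H=\|DF\|_H^2$ yields $\langle D(\varphi(F)),u\rangle_H=\varphi'(F)$ almost surely. Since $u\in\Dom\delta$ by hypothesis, the duality relation of Definition \ref{ISDef}$(ii)$, applied with the random variable $\varphi(F)$ in place of $F$, gives
\[
\M[\varphi'(F)]=\M\!\left[\varphi(F)\,\delta\!\left(\frac{DF}{\|DF\|_H^2}\right)\right].
\]

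Next I would specialize the test function so as to bring in the distribution of $F$. Fix a nonnegative continuous $\psi$ with compact support and set $\varphi(y)=\int_{-\infty}^{y}\psi(t)\,dt$; then $\varphi$ is bounded and $C^1$ with $\varphi'=\psi$, so it is admissible above. Writing $\varphi(F)=\int_{\mathbb{R}}\psi(x)\,1_{\{F>x\}}\,dx$ and interchanging the expectation with the $dx$-integral, I obtain
\[
\M[\psi(F)]=\int_{\mathbb{R}}\psi(x)\,\M\!\left[1_{\{F>x\}}\,\delta\!\left(\frac{DF}{\|DF\|_H^2}\right)\right]dx.
\]
The interchange is legitimate because $\psi$ has compact support and $\delta(u)\in L^2(\Omega)\subset L^1(\Omega)$. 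As $\psi$ ranges over all such functions, this identity shows that the law of $F$ is absolutely continuous and identifies its density as $p(x)=\M[1_{\{F>x\}}\,\delta(DF/\|DF\|_H^2)]$.

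It then remains to verify the stated regularity. Boundedness is immediate from $|p(x)|\le\M|\delta(DF/\|DF\|_H^2)|$, a bound uniform in $x$. For continuity, note that once the existence of a density is established the event $\{F=x\}$ is null for every $x$; hence $1_{\{F>x_n\}}\to 1_{\{F>x\}}$ almost surely whenever $x_n\to x$, and dominated convergence with the integrable majorant $|\delta(DF/\|DF\|_H^2)|$ shows that $p$ is continuous.

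The step I expect to be the main obstacle is the first one: rigorously justifying the chain rule $D(\varphi(F))=\varphi'(F)\,DF$ and the membership $\varphi(F)\in\mathbb{D}^{1,2}$ for merely $C^1$ functions $\varphi$, given that $F$ is only assumed to lie in $\mathbb{D}^{1,2}$ rather than to be smooth. This is handled by approximating $\varphi$ by functions in $\widehat{C}^\infty(\mathbb{R})$, applying the definition of the stochastic derivative on the class $\mathcal{S}$, and passing to the limit using closability of $D$; the Fubini interchange requires only the integrability bookkeeping already noted.
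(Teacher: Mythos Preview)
Your argument is correct and is essentially the standard proof of Proposition~2.1.1 in Nualart's book: integration by parts via the duality of $D$ and $\delta$ applied to $\varphi(F)$ with $\varphi'=\psi$, followed by Fubini to identify the density, and then dominated convergence for continuity. Note, however, that the paper does not supply its own proof of this lemma at all; it simply quotes the statement from \cite{Nualart} and uses it as a black box, so there is nothing to compare your approach against beyond the original source you have reproduced.
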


 We will also make use of the following variation of Fubini's theorem for the case of Skorohod integral.
\begin{lem}\label{NuaLeon210}\textbf{(Lemma 2.10 from \cite{NuaLeon})} Let the following conditions hold:
\begin{itemize}
\item [1)] Function $u(t,h, \omega) \in L^2\left([0,T]^2\times \Omega \right)$ and for almost all $t \in [0,T]$ the stochastic process $u(t, \cdot) \in \Dom
    \delta;$
\item [2)]$\mathbb{E}\left[\int_0^T |\delta (u(t, \cdot))|^2 dt\right]<\infty.$
\end{itemize}
Then $\{\int_0^T u(t, h)dt, \ h\in[0,T]\} \in \Dom \delta$ and
\begin{equation*}
\int_0^T\int_0^T u(t,h) dt dW_h=\int_0^T\int_0^T u(t,h) dW_h dt.
\end{equation*}
\end{lem}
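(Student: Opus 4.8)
The plan is to prove the statement by showing directly that $G \bydef \int_0^T u(t,\cdot)\,dt$ lies in $\Dom\delta$ and that its Skorohod integral equals $\int_0^T \delta(u(t,\cdot))\,dt$. The key tool is the duality relation from Definition \ref{ISDef}(ii): to identify $\delta(G)$ it suffices to verify that for every smooth test variable $F\in\mathbb{D}^{1,2}$ one has
\begin{equation*}
\M\bigl(\langle DF, G\rangle_H\bigr)=\M\left(F\int_0^T \delta(u(t,\cdot))\,dt\right).
\end{equation*}
First I would observe that condition 2) guarantees $\int_0^T\delta(u(t,\cdot))\,dt$ is a well-defined element of $L^2(\Omega)$ (by Cauchy–Schwarz in $t$), so the right-hand side makes sense.

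The main computation runs as follows. Starting from the left-hand side, I would write $\langle DF, G\rangle_H = \int_0^T (D_h F)\bigl(\int_0^T u(t,h)\,dt\bigr)\,dh$ and apply an ordinary (deterministic) Fubini theorem in the variables $t$ and $h$ to interchange the $dt$-integral with the inner product, obtaining $\int_0^T \langle DF, u(t,\cdot)\rangle_H\,dt$. Taking expectations and applying a second (classical) Fubini to move $\M$ past $\int_0^T dt$ gives $\int_0^T \M\bigl(\langle DF, u(t,\cdot)\rangle_H\bigr)\,dt$. Now for each fixed $t$, since $u(t,\cdot)\in\Dom\delta$ by condition 1), the duality relation of Definition \ref{ISDef}(ii) yields $\M\bigl(\langle DF, u(t,\cdot)\rangle_H\bigr)=\M\bigl(F\,\delta(u(t,\cdot))\bigr)$. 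Substituting and applying Fubini once more to pull $\M$ and $F$ back outside the $dt$-integral produces exactly $\M\bigl(F\int_0^T \delta(u(t,\cdot))\,dt\bigr)$, which is the desired identity.

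To conclude via Definition \ref{ISDef} I must also check the boundedness requirement (i), namely $\lvert\M(\langle DF,G\rangle_H)\rvert\le C\,(\M(F^2))^{1/2}$; this follows from the chain of equalities above together with Cauchy–Schwarz, taking $C=(\M\int_0^T\lvert\delta(u(t,\cdot))\rvert^2\,dt\cdot T)^{1/2}$, which is finite precisely by condition 2). This establishes $G\in\Dom\delta$ with $\delta(G)=\int_0^T\delta(u(t,\cdot))\,dt$, i.e. the claimed interchange of Skorohod and Lebesgue integrals.

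The hard part will be justifying the several Fubini interchanges rigorously rather than the algebra itself. The delicate points are: confirming the integrability needed for the classical Fubini steps (which is where the hypothesis $u\in L^2([0,T]^2\times\Omega)$ enters, controlling $\langle DF,u(t,\cdot)\rangle_H$ jointly in $(t,\omega)$), and verifying that the map $t\mapsto\delta(u(t,\cdot))$ is genuinely measurable and square-integrable over $[0,T]\times\Omega$ so that condition 2) can be invoked. These measurability and integrability verifications, though routine in spirit, are the technical core; once they are in place the duality argument closes the proof cleanly.
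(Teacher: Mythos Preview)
The paper does not actually prove this lemma: it is stated with the attribution ``Lemma 2.10 from \cite{NuaLeon}'' and used as a black box in the proof of Theorem~\ref{TheoremDensity}, so there is no proof in the paper to compare against. Your duality argument is the standard way such Skorohod--Fubini results are established (and is essentially how it is done in the cited reference): pair $G=\int_0^T u(t,\cdot)\,dt$ against $DF$, swap the deterministic integrals, invoke the duality $\M(\langle DF,u(t,\cdot)\rangle_H)=\M(F\,\delta(u(t,\cdot)))$ for a.e.\ $t$, and bound via Cauchy--Schwarz using condition~2). The technical caveats you flag---joint measurability of $(t,\omega)\mapsto\delta(u(t,\cdot))$ and the integrability needed for the classical Fubini steps---are exactly the points that require care, but your outline handles them correctly and the argument closes.
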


 Before we state and prove the main result of this work we recall that as it has been mentioned earlier in section \ref{OpPrice}, the random variables $\bar{\sigma}=\left(\dfrac{1}{T}\int_0^T\sigma^2(Y_s)ds\right)^{\frac{1}{2}}$ and $\widetilde{\sigma}=\left(\dfrac{1}{T}\int_0^T Z_sds\right)^{\frac{1}{2}}$ completely define the option prices in models \eqref{ModelB} and \eqref{ModelZ} respectively.

We introduce the notation
$\nu(x)=\sigma(x)\sigma'(x)$.

\begin{thm}\label{TheoremDensity}
\begin{itemize}
\item[1)] Let the function $\si$ satisfies assumption (A2) and is twice continuously differentiable,   its derivative  $ \si'$ is strictly positive and is of at most polynomial growth on infinity.  Then for the Ornstein--Uhlenbeck process $Y$ defined by stochastic differential equation \eqref{ModelA1} the random variable
   $\os^2$ has continuous bounded probability density function of the form
\begin{equation}\label{DensOU}
p_{\os^2}{(x)}=\mathbb{E}\left[1_{\{\overline{\si} >\sqrt{x}\}}\left(\int_0^T\eta_{t} \int_0^t\operatorname{e}^{\alpha s} dW_s\, dt-\int_0^T\int_0^t
\operatorname{e}^{\alpha h} D_h\eta_{t}\, dh\,dt\right)\right],
\end{equation}
where
\begin{equation}\label{eta}
 \eta_{t}={\alpha T\over k}\operatorname{e}^{-\alpha t} \nu(Y_t)
 \left[ \int\limits_{0}^T \int\limits_{0}^T\Big[\operatorname{e}^{-\alpha |t_1-t_2|}-\operatorname{e}^{-\alpha (t_1+t_2)}\Big]\nu(Y_{t_1}) \nu(Y_{t_2})
 \,dt_1dt_2\right]^{-1},
\end{equation}
  the expression for stochastic derivative $D_h\eta_{t}$ is given in Lemma \ref{Dheta} and all the components of the right-hand sides of equalities \eqref{DensOU} and \eqref{eta}
  are well defined.

\item[2)] Assume $6k^2<b$. For the Cox--Ingersoll--Ross process $Z$ defined by stochastic differential equation \eqref{ModelB1} the random variable $\ws^2$  has continuous bounded probability density function of the form:
\begin{equation}\begin{split}\label{DensKIR}
p_{\ws^2}(x)=&\mathbb{E}\left[1_{\{\ws>\sqrt{x}\}}\left(\frac{T}{k}\int_0^T\sqrt{Z_t}\int_0^t\Psi_{h,t}dW_h
\,dt-\frac{T}{2}\int_0^T\int_0^t\Psi_{h,t}\psi_{h,t}dh\,dt\right)\right],
\end{split}\end{equation}
where
\begin{equation*}\label{psi}
\psi_{h,t}:=\exp\left\{-\frac{t-h}{2}-\left(\frac{b}{2}-\frac{k^2}{8}\right)\int_h^t\frac{ds}{Z_s}\right\},\end{equation*}
\begin{equation*}\label{Psi}
\Psi_{h,t}=\psi_{h,t}\left[\int_{0}^T\int_{0}^T \sqrt{Z_{t_1}}\sqrt{Z_{t_2}}\int_0^{t_1\wedge
t_2}\psi_{h,t_1}\psi_{h,t_2}dh\,dt_1dt_2\right]^{-1}.\end{equation*}
\end{itemize}
\end{thm}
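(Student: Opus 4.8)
The plan is to apply the density proposition (Proposition 2.1.1 of \cite{Nualart}, the unnamed lemma above) to $F=\os^2=\frac1T\int_0^T\si^2(Y_s)\,ds$ in part~1) and to $F=\ws^2=\frac1T\int_0^T Z_s\,ds$ in part~2), working on the Wiener space of the process $\widetilde W$ that drives the volatility (this process plays the role of $W$ in the framework of Section~3.1, since $\os$ and $\ws$ are functionals of $\widetilde W$ alone). Because $\os,\ws\ge0$, the event $\{F>x\}$ coincides with $\{\os>\sqrt x\}$ (resp. $\{\ws>\sqrt x\}$), which is exactly the indicator appearing in \eqref{DensOU} and \eqref{DensKIR}. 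It then remains, in each case, to (i) compute the stochastic derivative $DF$, (ii) evaluate $\|DF\|_H^2$ in closed form, (iii) identify $u=DF/\|DF\|_H^2$, and (iv) compute the Skorohod integral $\delta(u)$, which is the quantity inside the expectation.

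For the Ornstein--Uhlenbeck case I would start from $D_hY_t=k\operatorname{e}^{-\alpha(t-h)}1_{[0,t]}(h)$ and the chain rule $D_h\si^2(Y_t)=2\nu(Y_t)D_hY_t$ to obtain $D_hF=\frac{2k}{T}\operatorname{e}^{\alpha h}\int_h^T\nu(Y_t)\operatorname{e}^{-\alpha t}\,dt$. Squaring, integrating in $h$ and applying ordinary Fubini collapses the $h$-integral over $[0,t_1\wedge t_2]$; using $\operatorname{e}^{-\alpha(t_1+t_2)}(\operatorname{e}^{2\alpha(t_1\wedge t_2)}-1)=\operatorname{e}^{-\alpha|t_1-t_2|}-\operatorname{e}^{-\alpha(t_1+t_2)}$ yields $\|DF\|_H^2=\frac{2k^2}{\alpha T^2}\,G$, where $G$ is the double integral in the bracket of \eqref{eta}. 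Consequently $u_h=\operatorname{e}^{\alpha h}\int_h^T\eta_t\,dt$ with $\eta_t$ as in \eqref{eta}. To compute $\delta(u)$ I would write $u_h=\int_0^T\operatorname{e}^{\alpha h}\eta_t1_{[0,t]}(h)\,dt$, interchange $dt$ and the Skorohod integral by Lemma~\ref{NuaLeon210}, and on the inner integral apply the product rule $\delta(\eta_t g)=\eta_t\,\delta(g)-\langle D\eta_t,g\rangle_H$ with the deterministic $g_h=\operatorname{e}^{\alpha h}1_{[0,t]}(h)$; this produces exactly the terms $\int_0^T\eta_t\int_0^t\operatorname{e}^{\alpha s}dW_s\,dt$ and $\int_0^T\int_0^t\operatorname{e}^{\alpha h}D_h\eta_t\,dh\,dt$ of \eqref{DensOU}, with $D_h\eta_t$ supplied by Lemma~\ref{Dheta}.

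The Cox--Ingersoll--Ross case follows the same template once the right derivative is found. Applying It\^o's formula to $\sqrt{Z_t}$ shows that it solves an SDE with \emph{constant} diffusion coefficient $k/2$, so its Malliavin derivative solves a linear equation with no stochastic term and is given in closed form by $D_h\sqrt{Z_t}=\frac{k}{2}\psi_{h,t}$, with $\psi_{h,t}$ as in the statement. Hence $D_hF=\frac{k}{T}\int_h^T\sqrt{Z_s}\,\psi_{h,s}\,ds$, and Fubini gives $\|DF\|_H^2=\frac{k^2}{T^2}H$, where $H=\int_0^T\int_0^T\sqrt{Z_{t_1}}\sqrt{Z_{t_2}}\int_0^{t_1\wedge t_2}\psi_{h,t_1}\psi_{h,t_2}\,dh\,dt_1dt_2$; thus $u_h=\frac{T}{k}\int_h^T\sqrt{Z_s}\,\Psi_{h,s}\,ds$ with $\Psi_{h,s}=\psi_{h,s}H^{-1}$. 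Interchanging $dt$ and the Skorohod integral as before and applying the product rule with the scalar $\frac{T}{k}\sqrt{Z_t}$ (whose derivative $D_h\sqrt{Z_t}=\frac{k}{2}\psi_{h,t}$ generates the correction term) yields precisely the two summands of \eqref{DensKIR}.

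The main obstacle is not the algebra but verifying the hypotheses that make all of the above legitimate: that $F\in\mathbb{D}^{1,2}$, that the nondegeneracy $\|DF\|_H>0$ holds a.s., that $u=DF/\|DF\|_H^2\in\Dom\delta$ (so that the density proposition applies), and that the two Fubini interchanges meet the integrability requirements of Lemma~\ref{NuaLeon210}. In the Ornstein--Uhlenbeck case the strict positivity of $\si'$ forces $\nu=\si\si'>0$, which keeps $G>0$ and, together with the polynomial-growth bounds in (A2), yields the moment estimates on $G^{-1}$ needed for $u$ to be Skorohod integrable. In the Cox--Ingersoll--Ross case the same programme requires finite negative moments of $Z$ (to control $\int 1/Z_s$ inside $\psi_{h,t}$ and $H^{-1}$); these hold only up to an order governed by $b/k^2$, and the reinforced assumption $6k^2<b$ is exactly what secures enough of them. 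Establishing these moment and integrability bounds is the technical heart of the proof.
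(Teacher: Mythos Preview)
Your proposal is correct and follows exactly the paper's route: apply the density proposition to $F=\os^2$ (resp.\ $\ws^2$), compute $DF$ and $\|DF\|_H^2$, write $u_h=DF_h/\|DF\|_H^2$ as a $dt$-integral, interchange $dt$ and the Skorohod integral via Lemma~\ref{NuaLeon210}, and split the inner integral by the product rule (Theorem~3.2 of \cite{NuaPar}). Your derivation of $D_h\sqrt{Z_t}$ through the constant-diffusion SDE for $\sqrt{Z_t}$ is a self-contained alternative to the paper's citation of \cite{Alos} and yields the same $D_hZ_t=k\psi_{h,t}\sqrt{Z_t}$.

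The one place where your sketch is too optimistic is the moment control on the denominators $G$ and $H$. Strict positivity of $\nu$ (or of $Z$) gives $G,H>0$ a.s., but not $\mathbb{E}[G^{-p}]<\infty$; polynomial growth bounds help only with the numerators. The paper's actual device is a localisation: restrict the double integral to $[0,\tau_1]^2$, where $\tau_1$ is the first exit time of $Y$ (resp.\ $Z$) from a fixed neighbourhood of its starting point; on that interval $\nu(Y_t)$ (resp.\ $Z_t$) is bounded below by a deterministic constant, which after evaluating the remaining deterministic kernel gives $G\ge C\tau_1^3$ and $H\ge C\widetilde\tau_1^3$. Finiteness of all negative moments of $\tau_1,\widetilde\tau_1$ is then obtained from an exponential small-ball estimate for the exit time of a semimartingale with bounded coefficients (Lemma~\ref{lemlem}, via \cite{Watanabe}). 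In the CIR case this is combined with the negative moments of $Z$ you mention; the condition $6k^2<b$ is used precisely so that $\sup_t\mathbb{E}Z_t^{-12}<\infty$, which is what the H\"older splittings in Lemma~\ref{lem_norma} demand. This stopping-time idea is the missing ingredient behind what you call the ``technical heart''.
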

\begin{proof}
$1)$ The stochastic derivative of the Ornstein--Uhlenbeck process has the form
$$
D_hY_t=\left(-\alpha \int_{h}^tD_hY_s\, ds+k\right) 1_{\{h<t\}}.
$$
Solving this equation with respect to $t$ with $h$ fixed we get
$$
D_hY_t=k \operatorname{e}^{-\alpha(t-h)}1_{\{h<t\}}.
$$
Then the stochastic derivative for the bounded continuously differentiable function $\si$ such that $ \si'$ is of at most polynomial growth on infinity is derived by the chain rule:
$$
D_h\si^2(Y_t)=2\si(Y_t)\si'(Y_t)D_hY_t=2k \operatorname{e}^{-\alpha (t-h)}\nu(Y_t)1_{\{h<t\}},
$$
and the stochastic derivative of the integral functional
$$
I_T(\si^2)=\int_0^T\si^2(Y_t)\, dt
$$
 is given by
\begin{equation}\label{difer}
D_hI_T(\si^2)=\int_0^TD_h\si^2(Y_t)\, dt=2k\int_h^T  \operatorname{e}^{-\alpha(t-h)}\nu(Y_t)\,dt, \quad h\leq T.
\end{equation}
Now it is necessary to determine if the following Skorohod integral exists:
$$
\overline{\delta} :=\delta\left(D\os^2\over \|D\os^2\|_H^2\right),
$$
where $\os^2= \frac{I_T(\si^2)}{T}. $

In order to do that we present the process which is being integrated in explicit form. According to \eqref{difer}
$$
D_h\os^2=\frac{2k}{T} \int_h^T  \operatorname{e}^{-\alpha (t-h)}\nu(Y_t)\,dt,
$$
hence,
\begin{gather*}
\|D\os^2\|_H^2=\int_{0}^T (D_h\os^2)^2\, dh=\frac{4k^2}{T^2}\int_0^T\left(\int_h^T   \operatorname{e}^{-\alpha(t-h)}\nu(Y_t)\,dt\right)^2\, dh
\\=\frac{4k^2}{T^2}\int_{0}^T\int_{0}^T\int_0^{t_1\wedge t_2} \operatorname{e}^{-\alpha(t_1-h)}\nu(Y_{t_1})
\operatorname{e}^{-\alpha(t_2-h)}\nu(Y_{t_2})\,dhdt_1dt_2
\\=\frac{2k^2}{\alpha T^2} \int_{0}^T\int_{0}^T\Big[\operatorname{e}^{-\alpha |t_1-t_2|}-\operatorname{e}^{-\alpha (t_1+t_2)}\Big]\nu(Y_{t_1}) \nu(Y_{t_2})
\,dt_1dt_2.
\end{gather*}

Thus, the stochastic process
$$
\overline{\zeta}_h:={D_h\os^2\over \|D\os^2\|^2}
$$
is given by
\begin{equation*}\label{Zeta}
\overline{\zeta}_h=\operatorname{e}^{\alpha h}\int_h^T \eta_{t}\, dt=\int_0^T u(t,h) dt, \quad u(t,h):=\eta_{t}\operatorname{e}^{\alpha h}1_{\{h<t\}}.
\end{equation*}
Lemma \ref{ZetaIntSkor} yields that  $\overline{\zeta}_h \in L^{1,2}$, hence,   the process  $\overline{\zeta}_h$ is Skorohod integrable. Or equivalently, there exists the integral $\overline{\delta}.$ Now we need to check if the conditions of Lemma
\ref{NuaLeon210} are fulfilled for $u(t,h).$ First, we make use of inequalities \eqref{est21} and  \eqref{Ref23} from Lemma   \ref{IntSkor} to derive that
\begin{equation*}
\mathbb{E}\left[\int_0^T \int_0^T u^2(t,h)dhdt\right]\leq \frac{\operatorname{e}^{2\alpha T}-1}{2\alpha}\mathbb{E}\left[\int_0^T\eta^2_t dt\right]< \infty,
\end{equation*}
and consequently $u(t,h, \omega) \in L^2\left([0,T]^2\times \Omega \right).$ Taking into account \eqref{est21}--\eqref{est23} for all $t\in[0,T]$ we get
\begin{equation}\label{SkorIso}
\begin{gathered}
 \mathbb{E}\left[\int_0^T u^2(t,h)dh+\int_0^T\int_0^T (D_s u(t,h))^2dsdh\right] \\
 = \mathbb{E}\left[\int_0^T (\eta_{t}\operatorname{e}^{\alpha h}1_{\{h<t\}})^2dh+\int_0^T\int_0^T (\operatorname{e}^{\alpha h}1_{\{h<t\}} D_s\eta_t)^2dsdh\right]\\
 \leq \frac{\operatorname{e}^{2\alpha T}-1}{2\alpha}\left(\mathbb{E}\eta^2_t
+\int_0^T \mathbb{E}\left[(D_s \eta_t)^2\right]ds\right)\leq C.
\end{gathered}
\end{equation}
  This yields the Skorohod integrability of $u(t,\cdot) $ for all $t\in[0,T]$. Hence, the first condition of Lemma \ref{NuaLeon210} is satisfied.
  Second condition is also fulfilled because $$\mathbb{E}[(\delta (u(t,\cdot)))^2]\leq \mathbb{E}\left[\int_0^T u^2(t,h)dh+\int_0^T\int_0^T (D_s u(t,h))^2dsdh\right]\leq C,$$ which yields that $\mathbb{E}\left[\int_0^T(\delta (u(t,h)))^2dt\right]< \infty.$
Then we can apply Fubini's theorem to the integral
\begin{equation*}
\overline{\delta}=\int_0^T\int_0^T \operatorname{e}^{\alpha h}\eta_{t}1_{\{h<t\}}\, dt dW_h
\end{equation*}
and change the order of integration:

 \begin{equation*}
\overline{\delta}=\int_0^T\int_0^T\operatorname{e}^{\alpha h}\eta_{t}1_{\{h<t\}}\, dW_h  dt.
\end{equation*}

From the last equality and from Theorem 3.2  \cite{NuaPar}  we deduce that
\begin{gather*}
\overline{\delta}=\int_0^T\left(\eta_{t} \int_0^T \operatorname{e}^{\alpha h} 1_{\{h<t\}}\, dW_h-\int_0^T \operatorname{e}^{\alpha h}
D_h\eta_{t}1_{\{h<t\}}dh\right)\,
dt \\
=\int_0^T\eta_{t} \left(\int_0^t \operatorname{e}^{\alpha h} dW_h\right) \,dt-\int_0^T\int_0^t \operatorname{e}^{\alpha h} D_h\eta_{t}\, dhdt.
\end{gather*}

$2)$ Now we will determine the probability density function of the random variable $\ws^2$ in a similar manner. By the Corollary 4.2 from \cite{Alos} the stochastic derivative of the process \eqref{ModelB1} is given by

$$
D_hZ_t=k \exp\left\{\int_h^t\left[-\frac{1}{2}-\left(\frac{b}{2}-\frac{k^2}{8}\right)\frac{1}{Z_s}\right]ds\right\}\sqrt{Z_t}
$$
$$=k \exp\left\{-\frac{t-h}{2}-\left(\frac{b}{2}-\frac{k^2}{8}\right)\int_h^t\frac{ds}{Z_s}\right\}\sqrt{Z_t}=k\psi_{h,t}\sqrt{Z_t}.$$
Then for
$$
I_T(Z_t)=\int_0^TZ_t\, dt
$$
the corresponding stochastic derivative is
$$
D_hI_T(Z_t)=\int_0^TD_h Z_t \, dt=k\int\limits_h^T\psi_{h,t}\sqrt{Z_t}\,dt, \,\,\, h\le T.$$
Now it is necessary to determine if the following Skorohod integral exists
$$
\widetilde{\delta}:=\delta\left(D\ws^2\over \|D\ws^2\|^2\right).
$$
Notice that
\begin{gather*}
D_h\ws^2=\frac{k}{T}\int_h^T  \psi_{h,t}\sqrt{Z_t}\,dt,
\end{gather*} and the corresponding norm equals to
\begin{gather*}
\|D\ws^2\|^2 =\int_{0}^T(D_h\ws)^2\, dh
 =\frac{k^2}{T^2}\int_0^T\left(\int_h^T  \psi_{h,t}\sqrt{Z_t}\,dt\right)^2\, dh
\\=
\frac{k^2}{T^2}\int_{0}^T\int_{0}^T \sqrt{Z_{t_1}}\sqrt{Z_{t_2}}\int_0^{t_1\wedge t_2}\psi_{h,t_1}\psi_{h,t_2}dhdt_1dt_2.
\end{gather*}
Thus the stochastic process
$$
 \widetilde{\zeta}_h:={D_h\ws^2\over \|D\ws^2\|^2}
$$
is given by
$$
\widetilde{\zeta}_h=\frac{T}{k}\int_h^T \sqrt{Z_t}\Psi_{h,t}dt=\frac{T}{k}\int_h^T\widetilde{u}(t,h)dt,\,\,\, \widetilde{u}(t,h):=\sqrt{Z_t}\Psi_{h,t}.
$$

According to Lemma \ref{lem_norma} the process $\widetilde{\zeta}_h$ is Skorohod integrable. Hence, the integral $\widetilde{\delta}$ exists.
It is time to check the conditions of Lemma \ref{NuaLeon210}. By Lemma  \ref{lem_norma}
$$\mathbb{E}\left[\int\limits_0^T\int\limits_0^T\widetilde{u}^2(t,h)dtdh\right]=\int\limits_0^T\int\limits_0^T\mathbb{E}(Z_t\Psi_{h,t}^2)dhdt<\infty$$
 and so $\widetilde{u}(t,h,\omega)\in L^2\left([0,T]^2\times \Omega \right)$. Taking into account inequalities \eqref{sup1}, \eqref{sup2} and applying similar reasoning as in \eqref{SkorIso} it is straightforward to derive the fact that for each $t\in [0,T]$ fixed
 $\widetilde{u}(t,h)\in \Dom \delta.$
 Then the first condition of Lemma \ref{NuaLeon210} is fulfilled. Obviously, the second condition is satisfied too because
\begin{equation*}\begin{gathered}
\mathbb{E}\left[\int_0^T\left(\delta (\widetilde{u}(t,h))\right)^2dt\right]\\
=\int_0^T\mathbb{E}\left[\int_0^T\widetilde{u}^2(t,h)dt+\int_0^T\int_0^TD_s\widetilde{u}(t,h)D_t\widetilde{u}(s,h)dsdt\right]<\infty,
\end{gathered}\end{equation*}
which uses the fact that the expression inside the integral is finite by Lemma \ref{lem_norma}.

We apply consequently Fubini's theorem and Theorem 3.2 from \cite{NuaPar} to get

\begin{gather*}
\widetilde{\delta}=\frac{T}{k}\int_0^T\int_0^T \sqrt{Z_t}\Psi_{h,t}1_{\{h<t\}}dtdW_h\\
=\frac{T}{k}\int_0^T\int_0^T \sqrt{Z_t}\Psi_{h,t}1_{\{h<t\}}dW_hdt\\
=\frac{T}{k}\int_0^T\left(\sqrt{Z_t}\int_0^T\Psi_{h,t}1_{\{h<t\}}dW_h-\int_0^T\Psi_{h,t} D_h\sqrt{Z_t}1_{\{h<t\}}dh\right)dt\\
=\frac{T}{k}\int_0^T\sqrt{Z_t}\int_0^t\Psi_{h,t}dW_h dt-\int_0^T\int_0^t\Psi_{h,t}\frac{D_hZ_t}{2\sqrt{Z_t}}dhdt\\
=\frac{T}{k}\int_0^T\sqrt{Z_t}\int_0^t\Psi_{h,t}dW_h dt-\frac{T}{2}\int_0^T\int_0^t\Psi_{h,t}\psi_{h,t}dhdt.
\end{gather*}
\end{proof}

\begin{nas}
Let the conditions of Theorem \ref{TheoremDensity} hold. Then the price of European call option $C=(S_T-K)^+$ with strike price $K\geq 0$   at the initial moment of time is given by
\begin{gather*}
V_C=\int_0^{\infty} \Bigg(S_0\Phi\left(\frac{\ln S_0+(r+\frac{1}{2}x)T-\ln K}{\sqrt{xT}}\right)\\ -K\operatorname{e}^{-rT}\Phi\left(\frac{\ln
S_0+(r-\frac{1}{2}x)T-\ln K}{\sqrt{xT}}\right)\Bigg)p{(x)} dx,
\end{gather*}
where $$
p{(x)}=\begin{cases}
p_{\os^2}{(x)},&\text{for model \eqref{ModelB},}\,\,\, p_{\os^2}{(x)} \text{is defined by \eqref{DensOU};}\\
p_{\ws^2}(x),&\text{for model \eqref{ModelZ},}\,\,\, p_{\ws^2}{(x)} \text{is defined by \eqref{DensKIR}.}
\end{cases}
$$
\end{nas}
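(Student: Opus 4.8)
The plan is to start from the representation \eqref{EV_0} of the option price and reduce it to an integral against the density supplied by Theorem \ref{TheoremDensity}. First I would record that under the minimal martingale measure $\mathbb{Q}$ the volatility process obeys exactly the same equation as under $\mathbb{P}$: for model \eqref{ModelB} one has $\widetilde{W}^{\mathbb{Q}}_t=\widetilde{W}_t$, so the law of $Y$, and hence of $\os^2=\tfrac1T\int_0^T\si^2(Y_s)\,ds$, is unchanged, and likewise for $Z$ in model \eqref{ModelZ}. Consequently the density computed in Theorem \ref{TheoremDensity} is precisely the density of $\os^2$ (resp.\ $\ws^2$) under the measure $\mathbb{Q}$ with respect to which the expectation in \eqref{EV_0} is taken.

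Next I would use the conditioning in \eqref{EV_0}: by the tower property the price equals $\operatorname{e}^{-rT}\mathbb{E}^{\mathbb{Q}}[E(\os)]$, where the inner conditional expectation $E(\os)$ is given in closed form by \eqref{IntExp}. The key observation is that the right-hand side of \eqref{IntExp} depends on the whole trajectory $\{Y_s\}$ only through the scalar $\os^2$; writing $x=\os^2$ it becomes a deterministic function
\begin{equation*}
g(x)=S_0\operatorname{e}^{rT}\Phi\!\left(\frac{\ln S_0+(r+\tfrac12 x)T-\ln K}{\sqrt{xT}}\right)-K\,\Phi\!\left(\frac{\ln S_0+(r-\tfrac12 x)T-\ln K}{\sqrt{xT}}\right).
\end{equation*}
Since $\Phi$ takes values in $[0,1]$, the map $g$ is bounded, so $E(\os)=g(\os^2)$ is a bounded random variable and its $\mathbb{Q}$-expectation is finite.

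I would then apply the transfer theorem (law of the unconscious statistician): because $\os^2$ admits the probability density $p_{\os^2}$ furnished by part~1) of Theorem \ref{TheoremDensity},
\begin{equation*}
V_C=\operatorname{e}^{-rT}\mathbb{E}^{\mathbb{Q}}[g(\os^2)]=\operatorname{e}^{-rT}\int_0^{\infty}g(x)\,p_{\os^2}(x)\,dx.
\end{equation*}
Distributing the discount factor $\operatorname{e}^{-rT}$ across the two terms of $g$ cancels the $\operatorname{e}^{rT}$ in the first summand and leaves the factor $K\operatorname{e}^{-rT}$ in the second, which reproduces exactly the integrand in the statement and establishes the formula for model \eqref{ModelB}. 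The case of model \eqref{ModelZ} is identical word for word once $\os^2$ is replaced by $\ws^2=\tfrac1T\int_0^T Z_s\,ds$ and $p_{\os^2}$ by the density $p_{\ws^2}$ from part~2) of Theorem \ref{TheoremDensity}, as indicated by the remark following \eqref{IntExp} that the Heston price is obtained from the Ornstein--Uhlenbeck price by this substitution.

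There is no genuinely hard step here: the result is a bookkeeping consequence of \eqref{EV_0}, \eqref{IntExp} and Theorem \ref{TheoremDensity}. The only points requiring a word of justification are the legitimacy of passing from an expectation to an integral against the density --- which is immediate once existence of the density is granted by the theorem and boundedness of $g$ guarantees integrability --- and the measure-change remark above, ensuring that the density one integrates against is the one relevant to the $\mathbb{Q}$-expectation rather than to $\mathbb{P}$.
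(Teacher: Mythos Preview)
Your proposal is correct and matches the paper's intended reasoning: the corollary is stated without proof, as an immediate consequence of combining the pricing representation \eqref{EV_0}--\eqref{IntExp} with the density supplied by Theorem~\ref{TheoremDensity}. Your write-up makes explicit the one point the paper leaves tacit, namely that under the minimal martingale measure $\widetilde{W}^{\mathbb{Q}}=\widetilde{W}$, so the law of $\os^2$ (resp.\ $\ws^2$) --- and hence the density computed in the theorem --- is the same under $\mathbb{Q}$ as under $\mathbb{P}$.
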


\section{Auxiliary results}
First, we are going to prove the result stating the boundedness of negative order moments of the random variables representing the first moment of time when the Ornstein--Uhlenbeck process $Y$ or Cox--Ingersoll-Ross process $Z$ leave certain interval. For each $Y_0\in \mathbb{R}$ fixed we consider arbitrary interval such that for each $x\in[Y_0-a, Y_0+a]$ the inequality $|\sigma'(x)-\sigma'(Y_0)|\leq \frac{\sigma'(Y_0)}{2}$ holds. Introduce the notation $\tau=\inf\{t>0: |Y_t-Y_0|\geq a\}$, $\tau_1=\tau\wedge T$.  For the Cox--Ingersoll--Ross process starting from the point  $Z_0>0$ we denote $\widetilde{\tau}=\inf\{t>0: |Z_t-Z_0|\geq \frac{Z_0}{2}\}$ and $\widetilde{\tau}_1=\widetilde{\tau}\wedge T$.

 \begin{lem}\label{lemlem} The negative moments of any order of the aforementioned random variables are finite, that is $\mathbb{E}(\tau_1)^{-p}<\infty$ and $\mathbb{E}(\widetilde{\tau}_1)^{-p}<\infty$ for each $p>0 $.
\end{lem}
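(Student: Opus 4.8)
The plan is to reduce the statement to a small-time estimate on the exit probabilities and then to show that these probabilities decay faster than any power of the elapsed time. First I would use that for a nonnegative random variable the negative moment admits the representation
$$\mathbb{E}[(\tau_1)^{-p}]=\int_0^\infty p\lambda^{-p-1}\mathbb{P}(\tau_1\le\lambda)\,d\lambda.$$
Since $\tau_1=\tau\wedge T\le T$, the range $\lambda\ge T$ contributes exactly $T^{-p}<\infty$, while for $\lambda<T$ one has $\{\tau_1\le\lambda\}=\{\tau\le\lambda\}$. Hence the whole problem reduces to proving that
$$\int_0^T \lambda^{-p-1}\,\mathbb{P}(\tau\le\lambda)\,d\lambda<\infty\qquad\text{for every }p>0,$$
so it suffices to bound the exit probability $\mathbb{P}(\tau\le\lambda)=\mathbb{P}\big(\sup_{0\le s\le\lambda}|Y_s-Y_0|\ge a\big)$ for small $\lambda$ by a quantity decaying faster than any power of $\lambda$. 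The same reduction applies verbatim to $\widetilde{\tau}_1$.

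For the Ornstein--Uhlenbeck process I would exploit that up to the exit time the trajectory stays in the bounded interval $[Y_0-a,Y_0+a]$, on which the drift is bounded. Writing $Y_t-Y_0=-\alpha\int_0^t Y_s\,ds+k\widetilde{W}_t$ and using $|Y_s|\le|Y_0|+a$ for $s\le\tau$, one obtains on $\{\tau<\lambda\}$ that
$$a=|Y_\tau-Y_0|\le\alpha(|Y_0|+a)\lambda+k\sup_{0\le s\le\lambda}|\widetilde{W}_s|.$$
Thus, for $\lambda$ small enough that the drift term is at most $a/2$, the event $\{\tau<\lambda\}$ is contained in $\{\sup_{0\le s\le\lambda}|\widetilde{W}_s|\ge a/(2k)\}$, and the reflection principle together with the Gaussian tail bound yields $\mathbb{P}(\tau<\lambda)\le C\exp(-c/\lambda)$ for some constants $c,C>0$. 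Since $\int_0^T\lambda^{-p-1}\mathrm{e}^{-c/\lambda}\,d\lambda<\infty$ for all $p>0$ (substitute $w=1/\lambda$ to reduce it to a convergent Gamma-type integral), this settles the first assertion.

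For the Cox--Ingersoll--Ross process the same scheme works, the only extra care concerning the state-dependent diffusion coefficient. On $[Z_0/2,3Z_0/2]$ both the drift $b-Z_t$ and the coefficient $k\sqrt{Z_t}$ are bounded, and $\sqrt{Z_t}$ is moreover bounded away from zero, so for $t\le\widetilde{\tau}$ the martingale part $N_t=k\int_0^t\sqrt{Z_s}\,d\widetilde{W}_s$ has quadratic variation $\langle N\rangle_t\le k^2(3Z_0/2)t$. I would invoke the Dambis--Dubins--Schwarz theorem to write $N_t=B_{\langle N\rangle_t}$ for a Brownian motion $B$, so that $\sup_{s\le\lambda}|N_s|\le\sup_{v\le K\lambda}|B_v|$ with $K=k^2(3Z_0/2)$. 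The reflection principle then gives $\mathbb{P}\big(\sup_{s\le\lambda}|N_s|\ge Z_0/4\big)\le C'\exp(-c'/\lambda)$, and arguing exactly as in the previous paragraph yields $\mathbb{P}(\widetilde{\tau}<\lambda)\le C'\exp(-c'/\lambda)$, which finishes the proof. I expect the main obstacle to be precisely this Cox--Ingersoll--Ross estimate: one must control the non-constant diffusion coefficient cleanly, which is why I pass to a time-changed Brownian motion before applying the reflection principle, and one must keep track of the fact that the localizing interval is bounded away from $0$ so that $\sqrt{Z_t}$ remains bounded both above and below up to the exit time.
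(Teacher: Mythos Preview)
Your argument is correct and follows the same overall strategy as the paper: both reduce the claim to an exponential small-time estimate $\mathbb{P}(\tau<\lambda)\le C\exp(-c/\lambda)$ for the exit probability and then conclude finiteness of all negative moments. The difference lies in how that estimate is obtained. The paper localizes (stopping the process when it leaves a large ball so that the drift becomes bounded) and then invokes Lemma~10.5 of Ikeda--Watanabe, which provides precisely such a bound for one-dimensional continuous semimartingales with bounded drift and quadratic-variation rates; you instead derive the bound from first principles, using the pathwise decomposition and the reflection principle for Brownian motion in the Ornstein--Uhlenbeck case and a Dambis--Dubins--Schwarz time change in the Cox--Ingersoll--Ross case. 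Your route is more self-contained and avoids the external reference, and it also makes explicit the layer-cake step $\mathbb{E}[\tau_1^{-p}]=\int_0^\infty p\lambda^{-p-1}\mathbb{P}(\tau_1\le\lambda)\,d\lambda$ that the paper leaves implicit. Conversely, the Ikeda--Watanabe lemma packages the estimate uniformly for any semimartingale with bounded coefficients, so the paper's version generalizes immediately without repeating the DDS construction.
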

\begin{proof} According to Lemma 10.5 \cite{Watanabe}, if  $K>0$ and  $X=\{X_t,t\ge 0\}$ is a one-dimensional continuous semimartingale of the form
$$X_t=X_0+M_t+A_t,$$
where $\langle M\rangle_t=\int_0^t\alpha (s)ds$ and  $A_t=\int_0^t\beta (s)ds$ with $|\alpha (s)|\le K$and  $|\beta (s)|\leq K$, then for each $a>0$ and $\lambda\in(0,\frac{a}{2K}]$ the following inequality $$\mathbb{ P}\{\tau_a<\lambda\}\leq \frac{4}{\sqrt{\pi a}}\exp\left\{-\frac{a^2}{8K\lambda}\right\}$$ holds for the moment of time $\tau_a$ at which semimartingale $X$ leaves the interval $[X_0-a, X_0+a]$ for the first time. Consider the Ornstein--Uhlenbeck process $Y_t=Y_0-\alpha \int_0^tY_sds+k W_t,$  where $W$ is a Wiener process, and choose arbitrary $N>a+|Y_0|.$ Denote $\tau^N=\inf\{t>0: |Y_t|\geq N\}$. Then $\tau_N>\tau_a.$ Furthermore
$$\widehat{Y}_t:=Y_{t\wedge\tau_N}=Y_0-\alpha \int_0^tY_{s\wedge\tau_N}1_{s\leq \tau_N}ds+k \int_0^t 1_{s\leq \tau_N} dW_s,$$
and semimartingale $\widehat{Y}_t$ satisfies the conditions of Lemma 10.5 \cite{Watanabe} for $K=N\cdot(\alpha\vee k)$. Moreover, denote
 $\tau_a^N=\inf\{t>0: |\widetilde{Y}_t-Y_0|\geq a\}$ and notice that $\tau_a^N=\tau_a$. Then in the vicinity of zero the distribution of the moment of time $\tau$ admits exponential estimate: there exist constants $C_1, C_2, C_3$ such that $$\mathbb{ P}\{\tau_a<\lambda\}\leq C_1\exp\left\{-\frac{C_2}{\lambda}\right\}, 0<\lambda<C_3,$$
 and the same holds for  $\tau_1.$ This proves Lemma for the case of Ornstein--Uhlenbeck process. The case of Cox--Ingersoll--Ross is dealt with absolutely similarly. Lemma is proved.
\end{proof}

Now we prove some technical results concerning the form of stochastic derivatives and estimates for them.
\begin{lem}\label{Dheta}
Let the conditions from item 1) of Theorem \ref{TheoremDensity} are fulfilled. Then stochastic derivative $D_h\eta_{t}$ has the form

\begin{equation}\begin{gathered}\label{deriv}
D_h\eta_{t}=\frac{\alpha T}{k}\operatorname{e}^{-\alpha t}\Bigg(\operatorname{e}^{-\alpha (t-h)}1_{\{h<t\}}\nu'(Y_t)
 \Big[ \int\limits_{0}^T \int\limits_{0}^T\Big[\operatorname{e}^{-\alpha |t_1-t_2|}
 -\operatorname{e}^{-\alpha (t_1+t_2)}\Big]\\
 \times\nu(Y_{t_1}) \nu(Y_{t_2})\,dt_1dt_2\Big]^{-1}-\nu(Y_t)
\Big[\int\limits_{0}^T \int\limits_{0}^T\Big[\operatorname{e}^{-\alpha |t_1-t_2|}
-\operatorname{e}^{-\alpha (t_1+t_2)}\Big]\\
\times\nu(Y_{t_1})\nu(Y_{t_2}) \,dt_1dt_2\Big]^{-2}
\int\limits_{0}^T \int\limits_{0}^T\Big[\operatorname{e}^{-\alpha |t_1-t_2|}
-\operatorname{e}^{-\alpha (t_1+t_2)}\Big]\\
\times\left(\nu(Y_{t_1})\operatorname{e}^{-\alpha (t_2-h)}1_{\{h<t_2\}}\nu'(Y_{t_2})+\nu(Y_{t_2})\operatorname{e}^{-\alpha
(t_1-h)}1_{\{h<t_1\}}\nu'(Y_{t_1})\right)\,dt_1dt_2\Bigg).
\end{gathered}\end{equation}
\end{lem}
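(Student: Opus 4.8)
The plan is to obtain \eqref{deriv} by a direct application of the product and chain rules for the Malliavin derivative to the definition \eqref{eta} of $\eta_t$, reusing the building blocks already established in the proof of Theorem \ref{TheoremDensity}. I would write $\eta_t=\frac{\alpha T}{k}\operatorname{e}^{-\alpha t}\,\nu(Y_t)\,A^{-1}$, where
$$A=\int_0^T\int_0^T\Big[\operatorname{e}^{-\alpha|t_1-t_2|}-\operatorname{e}^{-\alpha(t_1+t_2)}\Big]\nu(Y_{t_1})\nu(Y_{t_2})\,dt_1dt_2$$
is the strictly positive random double integral appearing in \eqref{eta}. The deterministic prefactor $\frac{\alpha T}{k}\operatorname{e}^{-\alpha t}$ is inert under $D_h$, so the whole task reduces to differentiating the product of the two random factors $\nu(Y_t)$ and $A^{-1}$ via
$$D_h\eta_t=\frac{\alpha T}{k}\operatorname{e}^{-\alpha t}\big(A^{-1}D_h\nu(Y_t)+\nu(Y_t)D_h(A^{-1})\big).$$

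First I would record the two elementary derivatives. As in the proof of Theorem \ref{TheoremDensity}, $D_hY_t=k\operatorname{e}^{-\alpha(t-h)}1_{\{h<t\}}$, and since $\sigma\in C^2$ with $\sigma'$ positive and of polynomial growth, the function $\nu=\sigma\sigma'$ is $C^1$ with $\nu'=(\sigma')^2+\sigma\sigma''$ again of polynomial growth; the chain rule then gives
$$D_h\nu(Y_t)=\nu'(Y_t)\,D_hY_t=k\operatorname{e}^{-\alpha(t-h)}1_{\{h<t\}}\nu'(Y_t).$$
To differentiate $A$ I would commute $D_h$ with the double Lebesgue integral and apply the product rule inside the integrand, obtaining
$$D_hA=k\int_0^T\int_0^T\Big[\operatorname{e}^{-\alpha|t_1-t_2|}-\operatorname{e}^{-\alpha(t_1+t_2)}\Big]\Big(\nu(Y_{t_2})\nu'(Y_{t_1})\operatorname{e}^{-\alpha(t_1-h)}1_{\{h<t_1\}}+\nu(Y_{t_1})\nu'(Y_{t_2})\operatorname{e}^{-\alpha(t_2-h)}1_{\{h<t_2\}}\Big)\,dt_1dt_2.$$
Using $D_h(A^{-1})=-A^{-2}D_hA$ and substituting these two displays into the decomposition above, then collecting terms, reproduces precisely the two bracketed contributions displayed in \eqref{deriv}; this part is bookkeeping rather than mathematics.

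The genuine content, and the main obstacle, will be the analytic justification that each of these operations is legitimate, i.e. that $\nu(Y_t)$, $A$ and $A^{-1}$ all lie in $\mathbb{D}^{1,2}$ and that $D_h$ may be pulled inside the double integral defining $A$. For the ``numerators'' this is routine: the Gaussian process $Y$ has finite moments of every order, so the polynomial growth of $\nu$ and $\nu'$ gives $L^2$-integrability of $\nu(Y_t)$ together with its derivative, and the commutation of $D_h$ with the $dt_1dt_2$-integral then follows from the closedness of $D$ (approximating the integral by Riemann sums, or a Fubini-type argument). The delicate factor is $A^{-1}$: the chain rule for $x\mapsto x^{-1}$ applies only away from the origin, so I must first verify $A>0$ almost surely and then control its negative moments. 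Strict positivity is immediate, since (A2) and $\sigma'>0$ force $\nu=\sigma\sigma'\ge c\,\sigma'>0$ while the kernel $\operatorname{e}^{-\alpha|t_1-t_2|}-\operatorname{e}^{-\alpha(t_1+t_2)}$ is nonnegative and strictly positive off the boundary. To secure $A^{-1}$ (and the square-integrability of $D_h(A^{-1})=-A^{-2}D_hA$) I would bound $A$ from below by a multiple of an exit-time functional of $Y$ and invoke the finiteness of negative moments established in Lemma \ref{lemlem}; this is exactly the step for which that auxiliary lemma was designed, and it is where the local, trajectory-dependent behaviour of $\nu(Y)$ near the starting point must be handled with care.
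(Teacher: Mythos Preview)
Your approach is essentially identical to the paper's: both write $\eta_t=\frac{\alpha T}{k}\operatorname{e}^{-\alpha t}\nu(Y_t)A^{-1}$, apply the product rule together with the chain rule for $x\mapsto x^{-1}$, differentiate $A$ by pushing $D_h$ inside the double integral and using the product rule on $\nu(Y_{t_1})\nu(Y_{t_2})$, and finally substitute the known expression for $D_hY_t$. The only difference is organizational: the paper explicitly declares that the positivity of $A$ and the integrability needed to legitimize these manipulations are deferred to Lemma~\ref{IntSkor} (``here we limit ourselves to derivation of its form''), whereas you sketch that justification---the lower bound on $A$ via Lemma~\ref{lemlem} and the polynomial moment bounds---inline.
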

\begin{proof}
It worth notice that in order to be completely correct we need to begin with checking that double integral in denominator of the expression for $\eta_{t}$ and the expression for stochastic derivative of this function is almost surely positive. However we perform this check in Lemma \ref{IntSkor}, where the integrability of stochastic derivative is proven, and here we limit ourselves to derivation of its form.

By the chain rule
\begin{multline*}
D_h\eta_{t}={\alpha T\over k}\operatorname{e}^{-\alpha t}\\
\times D_h \left(\nu(Y_t)
\left[ \int\limits_{0}^T \int\limits_{0}^T\Big[\operatorname{e}^{-\alpha |t_1-t_2|}-\operatorname{e}^{-\alpha (t_1+t_2)}\Big]\nu(Y_{t_1})\nu(Y_{t_2})
\,dt_1dt_2\right]^{-1}\right)
\end{multline*}
\begin{multline*}
={\alpha T\over k}\operatorname{e}^{-\alpha t}\left\{D_h\nu(Y_t) \left[ \int\limits_{0}^T \int\limits_{0}^T\Big[\operatorname{e}^{-\alpha
|t_1-t_2|}-\operatorname{e}^{-\alpha (t_1+t_2)}\Big]\nu(Y_{t_1})\nu(Y_{t_2}) \,dt_1dt_2\right]^{-1}\right.\\
\left.\quad+\nu(Y_t)D_h\left(\left[ \int\limits_{0}^T \int\limits_{0}^T\Big[\operatorname{e}^{-\alpha |t_1-t_2|}-\operatorname{e}^{-\alpha
(t_1+t_2)}\Big]\nu(Y_{t_1})\nu(Y_{t_2}) \,dt_1dt_2\right]^{-1}\right)\right\}
\end{multline*}
\begin{multline*}
={\alpha T\over k}\operatorname{e}^{-\alpha t}\Bigg\{D_hY_t\nu'(Y_t)\left[ \int\limits_{0}^T \int\limits_{0}^T\Big[\operatorname{e}^{-\alpha
|t_1-t_2|}-\operatorname{e}^{-\alpha (t_1+t_2)}\Big]
\nu(Y_{t_1})\nu(Y_{t_2})\,dt_1dt_2\Bigg]^{-1}\right.\\
-\nu(Y_t)\left[ \int\limits_{0}^T \int\limits_{0}^T\Big[\operatorname{e}^{-\alpha |t_1-t_2|}-\operatorname{e}^{-\alpha
(t_1+t_2)}\Big]\nu(Y_{t_1})\nu(Y_{t_2})\,dt_1dt_2\right]^{-2}\int\limits_{0}^T \int\limits_{0}^T\Big[\operatorname{e}^{-\alpha |t_1-t_2|}-\\
\operatorname{e}^{-\alpha (t_1+t_2)}\Big]\Big[\left(\nu(Y_{t_1})D_hY_{t_2}\nu'(Y_{t_2})+\nu(Y_{t_2})D_hY_{t_1}\nu'(Y_{t_1})\right)\Big]\,dt_1dt_2\Bigg\}.
\end{multline*}

As $D_hY_t=\operatorname{e}^{-\alpha (t-h)}1_{\{h<t\}}$ the lema is proved.
\end{proof}

\begin{lem}\label{IntSkor}
Let the conditions from item 1) of Theorem \ref{TheoremDensity} are fulfilled. Then $\eta_t  \in L^{1,2}$ and as a consequence for any $h\in [0,T]:$  $\eta_t 1_{\{h<t\}} \in
L^{1,2}.$
\end{lem}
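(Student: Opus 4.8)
The plan is to estimate the two summands of
$$\|\eta\|_{L^{1,2}}^2=\mathbb{E}\int_0^T \eta_t^2\,dt+\mathbb{E}\int_0^T\int_0^T (D_s\eta_t)^2\,ds\,dt$$
separately, in fact by producing bounds $\mathbb{E}\,\eta_t^2\le C$ and $\int_0^T\mathbb{E}(D_s\eta_t)^2\,ds\le C$ that are uniform in $t\in[0,T]$; integrating these over $t$ then yields finiteness of the $L^{1,2}$-norm. (These uniform estimates are precisely the ingredients invoked in the proof of Theorem~\ref{TheoremDensity}.) Writing
$$J:=\int_{0}^T\int_{0}^T\big[\operatorname{e}^{-\alpha|t_1-t_2|}-\operatorname{e}^{-\alpha(t_1+t_2)}\big]\nu(Y_{t_1})\nu(Y_{t_2})\,dt_1dt_2,$$
so that $\eta_t=\tfrac{\alpha T}{k}\operatorname{e}^{-\alpha t}\nu(Y_t)\,J^{-1}$, one observes that every factor occurring in $\eta_t$ and, via the explicit formula \eqref{deriv} of Lemma~\ref{Dheta}, in $D_s\eta_t$ is of one of two admissible types: either a value $\nu(Y_u)$ or $\nu'(Y_u)$ (or a deterministically weighted integral of such), which has polynomial growth in the Gaussian variables $Y_u$ and hence finite moments of every order; or a negative power $J^{-1}$, $J^{-2}$ of the random denominator. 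Through Hölder's inequality the whole argument therefore reduces to controlling the negative moments $\mathbb{E}[J^{-p}]$ for the finitely many relevant $p$.

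The central step, and the main obstacle, is the lower bound on $J$, for which I would use the interval and exit time of Lemma~\ref{lemlem}. Since $\si\ge c>0$ and $\si'(x)\ge\tfrac12\si'(Y_0)$ for $x\in[Y_0-a,Y_0+a]$, we have $\nu(x)=\si(x)\si'(x)\ge\nu_0:=\tfrac{c}{2}\si'(Y_0)>0$ on that interval, hence $\nu(Y_t)\ge\nu_0$ for all $t<\tau_1$. Starting from the representation $J=2\alpha\int_0^T\big(\int_h^T\operatorname{e}^{-\alpha(t-h)}\nu(Y_t)\,dt\big)^2\,dh$, restricting both the outer and the inner integral to $[0,\tau_1]$ and using $1-\operatorname{e}^{-\alpha u}\ge\tfrac12\alpha u$ for $\alpha u\le1$, I expect a bound of the form $J\ge C\min(\tau_1^{3},1)$, and therefore $J^{-1}\le C'(\tau_1^{-3}+1)$ almost surely. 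In particular $J>0$ a.s., which settles the positivity remark deferred from Lemma~\ref{Dheta}; and since Lemma~\ref{lemlem} guarantees that $\mathbb{E}\,\tau_1^{-p}<\infty$ for every $p>0$, it follows that $\mathbb{E}[J^{-p}]<\infty$ for every $p>0$.

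The remaining work is then routine bookkeeping. For the first summand I would apply Cauchy--Schwarz to $\mathbb{E}[\nu^2(Y_t)J^{-2}]\le(\mathbb{E}\,\nu^4(Y_t))^{1/2}(\mathbb{E}\,J^{-4})^{1/2}$, both factors being finite and bounded uniformly in $t$ by the polynomial growth of $\nu$ together with the negative-moment bound above. For the second summand I would substitute the explicit expression \eqref{deriv} for $D_s\eta_t$, expand the square, and apply Hölder's inequality to each resulting product of finitely many factors of the two admissible types; each contributes a finite moment (a Gaussian moment of a polynomially growing functional of $Y$, or a finite negative moment of $J$), while the exponential weights $\operatorname{e}^{-\alpha(\cdot)}$ together with the indicators $1_{\{h<t\}}$ keep the integration in $s$ bounded, so the bound survives integration over $s\in[0,T]$. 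Finally, multiplication by the deterministic indicator $1_{\{h<t\}}$ increases neither summand of the $L^{1,2}$-norm, since it is bounded by $1$ and has vanishing stochastic derivative; hence $\eta_t1_{\{h<t\}}\in L^{1,2}$ follows at once.
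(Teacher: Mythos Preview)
Your proposal is correct and follows essentially the same route as the paper: both reduce the $L^{1,2}$ estimate to bounding negative moments of the denominator $J$, obtain a lower bound $J\ge C\tau_1^3$ by restricting the integral to $[0,\tau_1]$ and using $\nu(Y_t)\ge\tfrac{c}{2}\sigma'(Y_0)$ there, invoke Lemma~\ref{lemlem} for $\mathbb{E}\tau_1^{-p}<\infty$, and then combine via H\"older with the Gaussian moments of $\nu(Y_t),\nu'(Y_t)$ and the explicit formula \eqref{deriv}. The only cosmetic difference is that the paper uses the inequality $1-\operatorname{e}^{-\alpha x}\ge\alpha x\operatorname{e}^{-\alpha x}$ (valid for all $x\ge0$) to get $J\ge C\tau_1^3$ directly, whereas your local inequality $1-\operatorname{e}^{-\alpha u}\ge\tfrac12\alpha u$ for $\alpha u\le1$ yields $J\ge C\min(\tau_1^3,1)$; since $\tau_1\le T$ this is equivalent.
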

\begin{proof}
In order to prove the statement of the lemma we need to check that inequality
\begin{equation*}\begin{gathered}\label{est1}
||\eta_t ||^2_{L^{1,2}}=\mathbb{E}\left[\int_0^T\eta^2_t dt
+\int_0^T\int_0^T \left(D_h \eta_t \right)^2 dtdh\right]< \infty
\end{gathered}\end{equation*}
holds.
Let us show that the first summand from the right-hand side of the above expression is bounded.

We introduce the set $A=\{ x \in \mathbb{R}: |\sigma'(x)-\sigma'(Y_0)|\geq \sigma'(Y_0)/2\}$. Recall that $\tau=\inf\left\{t: Y_t \in A\right\},$ $\tau_1=\tau
\wedge T.$ The fact that the expression
$$
\Big[\operatorname{e}^{-\alpha |t_1-t_2|}-\operatorname{e}^{-\alpha (t_1+t_2)}\Big]\nu(Y_{t_1})\nu(Y_{t_2})
$$
 is nonnegative by the conditions of the theorem along with the assumption (A2) which gives that $\sigma(x) \geq c >0$ for $x \in \mathds{R}$ and some constant $c>0,$  yield the inequality $\nu(Y_t)\geq
 \frac{c}{2}\sigma'(Y_0)$ for $t \in (0,\tau).$  Then
\begin{equation}\begin{gathered}\label{equat1}
\eta_{t}={\alpha T\over k}\operatorname{e}^{-\alpha t} \nu(Y_t) \\
\times\left( \int\limits_{0}^T \int\limits_{0}^T\Big(\operatorname{e}^{-\alpha |t_1-t_2|}-\operatorname{e}^{-\alpha (t_1+t_2)}\Big)\nu(Y_{t_1})\nu(Y_{t_2})
\,dt_1dt_2\right)^{-1}\\
\leq {\alpha T\over k}\operatorname{e}^{-\alpha t} \nu(Y_t) \\
\times\left[ \int\limits_{0}^{\tau_1} \int\limits_{0}^{\tau_1}\Big[\operatorname{e}^{-\alpha |t_1-t_2|}-\operatorname{e}^{-\alpha
(t_1+t_2)}\Big]\nu(Y_{t_1})\nu(Y_{t_2}) \,dt_1dt_2\right]^{-1}\\
\leq \frac{4 \alpha T}{k(c\sigma'(Y_0))^2}\operatorname{e}^{-\alpha t} \nu(Y_t)\left[ \int\limits_{0}^{\tau_1}
\int\limits_{0}^{\tau_1}\Big(\operatorname{e}^{-\alpha
|t_1-t_2|}-\operatorname{e}^{-\alpha (t_1+t_2)}\Big)\,dt_1dt_2\right]^{-1}.
\end{gathered}\end{equation}

Below we denote by $C$ or  $C$ with indexes constants values of which are unimportant. Consider the double integral in the denominator and evaluate it in arbitrary point $x>0$:
\begin{gather*}
\psi(x):=\int_{0}^{x} \int_{0}^{x}\Big(\operatorname{e}^{-\alpha |t_1-t_2|}-\operatorname{e}^{-\alpha (t_1+t_2)}\Big)\,dt_1dt_2\\=
\frac{\alpha}{2}\int_0^{x} \Big(\int_h^{x}\operatorname{e}^{-\alpha (s-h)}
\,ds\Big)^2 dh \\
=\frac{1}{2\alpha}\int_0^{x}\big(\operatorname{e}^{2\alpha (h-x)}-2\operatorname{e}^{\alpha
(h-x)}+1\big)dh\\=\frac{1}{2\alpha}\left(\frac{-\operatorname{e}^{-2\alpha x}+1}{2\alpha}+2\frac{\operatorname{e}^{-\alpha
x}-1}{\alpha}+x\right)\\
=C(4\operatorname{e}^{-\alpha x}-\operatorname{e}^{-2\alpha x}+2\alpha x-3).
\end{gather*}

Notice that  $\psi(0)=0$ and the function $\psi(x)=C(4\operatorname{e}^{-\alpha x}-\operatorname{e}^{-2\alpha x}+2\alpha x-3)/(4\alpha^2)$ increases on $\mathbb{R}$.
We make use of elementary inequality $1-\operatorname{e}^{-\alpha x} \geq \alpha x \operatorname{e}^{-\alpha x},$   $x \geq 0,$ in order to find the lower bound for the derivative of the function $\psi:$
\begin{equation*}
\psi'(x)=\frac{1}{2\alpha}(1-\operatorname{e}^{-\alpha x})^2 \geq Cx^2 \operatorname{e}^{-2\alpha x}.
\end{equation*}
Hence
\begin{equation*}
\psi(x)=\int_0^{x}\psi'(s)ds \geq C\int_0^{x}  {e}^{-2\alpha s} s^2 ds > C_1x^3.
\end{equation*}
Then with probability one the following inequality holds:
 \begin{equation}\begin{gathered}\label{tau1}\int\limits_{0}^T \int\limits_{0}^T\Big(\operatorname{e}^{-\alpha |t_1-t_2|}-\operatorname{e}^{-\alpha (t_1+t_2)}\Big)\nu(Y_{t_1})\nu(Y_{t_2})\,dt_1dt_2\geq C \tau_1^3\end{gathered}\end{equation}
Moreover   $\nu^2(x) \leq C(1+|x|^m)$ for some $C>0,$ $m \in \mathds{N},$ because $\sigma(x)$ and $\sigma'(x)$ are of at most polynomial growth. Taking into account \eqref{equat1} we arrive at inequality
$$\eta_{t}\leq C\operatorname{e}^{-\alpha t} \nu(Y_t)\tau_1^{-3}\leq C(1+|Y_t|^m)\tau_1^{-3}.$$ The moments of any order of Ornstein--Uhlenbeck process are uniformly bounded on any interval, so taking into account Lemma \ref{lemlem} we get
\begin{equation}\label{est21}  \sup_{t\in T}\mathbb{E}\eta^2_t \leq C\sup_{t\in T}(\mathbb{E}(1+|Y_t|^{2m})(\mathbb{E}\tau_1^{-6}))^{\frac{1}{2}}\leq C.
\end{equation}
This yields the necessary estimate
\begin{equation}\label{Ref23}
\begin{gathered}
\mathbb{E}\left[\int_0^T \eta^2_t dt\right]
<\infty.
\end{gathered}
\end{equation}

Now we consider the second summand in the norm. First,using \eqref{deriv} and \eqref{tau1} we assess the stochastic derivative:
\begin{equation}\label{IntDheta}
\begin{gathered}
   |\left(D_h \eta_{t} \right)|
\leq C\left(\tau_1^{-3}\nu'(Y_t) + \tau_1^{-6}\nu'(Y_t)  \int_0^T\nu(Y_s)ds\int_0^T\nu'(Y_u)du \right)\\ \leq C(1+|Y_t|^{m_1})(\tau_1^{-3}+\tau_1^{-6})
\end{gathered}
\end{equation}
for some $m_1\in \mathds{N}$. Considerations similar to those covered in the proof of \eqref{est21} yield that there exists a constant $C>0$ such that
\begin{equation}\label{est23} \sup_{t\in T}\mathbb{E} \left(D_h \eta_{t} \right)^2\leq C\;\; \text{i}\;\;\int_0^T\int_0^T \left(D_h \eta_t \right)^2 dtdh<\infty,
\end{equation} and by definition $\eta_t  \in L^{1,2}.$ Lemma is proved.
\end{proof}

\begin{lem}\label{ZetaIntSkor}
Let the requirements of item 1) of Theorem \ref{TheoremDensity} are fulfilled. Then $\overline{\zeta}_h \in L^{1,2}.$
\end{lem}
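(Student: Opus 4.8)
The plan is to bound directly the two terms in $\|\overline{\zeta}\|_{L^{1,2}}^2$, relying only on the uniform moment estimates \eqref{est21} and \eqref{est23} already established in Lemma \ref{IntSkor}. Recall from the proof of Theorem \ref{TheoremDensity} that $\overline{\zeta}_h=\operatorname{e}^{\alpha h}\int_h^T\eta_t\,dt$. For the $L^2(\Omega\times[0,T])$ term I would use the crude bound $\operatorname{e}^{2\alpha h}\le\operatorname{e}^{2\alpha T}$ on $[0,T]$ together with the Cauchy--Schwarz inequality in the variable $t$, namely $(\int_h^T\eta_t\,dt)^2\le T\int_0^T\eta_t^2\,dt$, and then Tonelli's theorem to interchange the Lebesgue integrals with the expectation. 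This gives
\begin{equation*}
\mathbb{E}\int_0^T\overline{\zeta}_h^2\,dh\le\operatorname{e}^{2\alpha T}T^2\int_0^T\mathbb{E}\eta_t^2\,dt\le\operatorname{e}^{2\alpha T}T^3\sup_{t\in[0,T]}\mathbb{E}\eta_t^2<\infty,
\end{equation*}
where finiteness of the last quantity is exactly \eqref{est21}.

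For the derivative term I first need the identity $D_s\overline{\zeta}_h=\operatorname{e}^{\alpha h}\int_h^T D_s\eta_t\,dt$, i.e. that the stochastic derivative may be taken under the Lebesgue integral sign (this point is addressed below). Granting it, the same two ingredients---the bound $\operatorname{e}^{2\alpha h}\le\operatorname{e}^{2\alpha T}$ and the Cauchy--Schwarz estimate $(\int_h^T D_s\eta_t\,dt)^2\le T\int_0^T(D_s\eta_t)^2\,dt$---combined with Tonelli's theorem yield
\begin{equation*}
\mathbb{E}\int_0^T\int_0^T(D_s\overline{\zeta}_h)^2\,dh\,ds\le\operatorname{e}^{2\alpha T}T^4\sup_{s,t\in[0,T]}\mathbb{E}(D_s\eta_t)^2<\infty,
\end{equation*}
the last bound being precisely \eqref{est23}. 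Adding the two displays shows $\|\overline{\zeta}\|_{L^{1,2}}^2<\infty$, that is $\overline{\zeta}_h\in L^{1,2}$.

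The only delicate point---and the main obstacle---is justifying the interchange $D_s\int_h^T\eta_t\,dt=\int_h^T D_s\eta_t\,dt$. This does not hold for an arbitrary integrand, but it is a standard consequence of the closedness of the operator $D$: if $t\mapsto\eta_t$ is measurable, $\eta_t\in\mathbb{D}^{1,2}$ for almost every $t$, and $\mathbb{E}\int_0^T\eta_t^2\,dt+\mathbb{E}\int_0^T\int_0^T(D_s\eta_t)^2\,ds\,dt<\infty$, then $\int_h^T\eta_t\,dt\in\mathbb{D}^{1,2}$ with stochastic derivative $\int_h^T D_s\eta_t\,dt$ (see \cite{Nualart}). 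Precisely these three hypotheses are furnished by Lemma \ref{IntSkor}---the membership $\eta_t\in\mathbb{D}^{1,2}$ and the uniform estimates \eqref{est21} and \eqref{est23}---so the commutation is legitimate and the argument closes.
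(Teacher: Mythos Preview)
Your proof is correct and follows essentially the same approach as the paper: both bound the two terms of the $L^{1,2}$ norm via Cauchy--Schwarz in the time variable and then invoke the moment estimates \eqref{est21} and \eqref{est23} from Lemma~\ref{IntSkor}. The only difference is cosmetic---you use the crude bound $\operatorname{e}^{2\alpha h}\le\operatorname{e}^{2\alpha T}$ whereas the paper keeps the integral $\int_0^T\operatorname{e}^{2\alpha h}dh$---and you are in fact more careful than the paper in explicitly justifying the commutation $D_s\int_h^T\eta_t\,dt=\int_h^T D_s\eta_t\,dt$ via closedness of $D$.
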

\begin{proof}
In order to prove the statement of the lemma we need to prove that the following inequality holds:
\begin{equation}\label{ZetaL12}
\begin{gathered}
||\overline{\zeta}_h||^2_{L^{1,2}}=\mathbb{E}\Bigg[\int_0^T\left(\operatorname{e}^{\alpha h}\int_h^T \eta_{s}\, ds\right)^2dh\Bigg]\\
+\mathbb{E}\Bigg[\int_0^T\int_0^T \left(D_h \left(\operatorname{e}^{\alpha t}\int_t^T \eta_{s}\, ds\right)\right)^2 dtdh \Bigg] < \infty.
\end{gathered}
\end{equation}
For the first summand in \eqref{ZetaL12} by  \label{Ref2} we have

\begin{gather*}
\mathbb{E}\Bigg[\int_0^T\left(\operatorname{e}^{\alpha h}\int_h^T \eta_{s}\, ds\right)^2dh\Bigg] \leq
T\int_0^T\operatorname{e}^{2\alpha h}\mathbb{E}\Bigg[\int_0^T \eta^2_{s}\, ds\Bigg]dh < \infty,
\end{gather*}
and according to \eqref{est23} for the second summand:
\begin{gather*}
\mathbb{E}\Bigg[\int_0^T\int_0^T \left(D_h \left(\operatorname{e}^{\alpha t}\int_t^T \eta_{s}\, ds\right)\right)^2 dtdh\Bigg] \\
=\int_0^T\int_0^T \operatorname{e}^{2\alpha t}\mathbb{E}\Bigg[\left(\int_0^T D_h(\eta_{s}1_{\{t<s\}})\, ds\right)^2 \Bigg]dtdh \\
\leq T\int_0^T\int_0^T \operatorname{e}^{2\alpha t}\mathbb{E}\Bigg[\int_0^T (D_h\eta_{s})^2\, ds \Bigg]dtdh < \infty.
\end{gather*}
\end{proof}

\begin{zau}\label{moment}
In order to prove the following lemma we recall the result from \cite[inequality (3.1)]{Andreas}: $\sup\limits_{[0,T]}\mathbb{E} Z_t^p<\infty$ for any
$p>-\frac{2b}{k^2}$.
\end{zau}

\begin{lem}
\label{lem_norma} Let the coefficients of Cox--Ingersoll--Ross process given by equation \eqref{ModelB1} satisfy the inequality $6k^2<b$. Then  $\sqrt{Z_t}\Psi_{h,t}  \in L^{1,2}$ and   $\int_h^{T}\sqrt{Z_t}\Psi_{h,t}dt  \in L^{1,2}$.

\end{lem}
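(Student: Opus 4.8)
The plan is to follow the same route as in the proofs of Lemmas \ref{IntSkor} and \ref{ZetaIntSkor}, replacing the Ornstein--Uhlenbeck estimates by their Cox--Ingersoll--Ross counterparts. I would first record that the denominator occurring in $\Psi_{h,t}$,
$$
G:=\int_{0}^T\int_{0}^T \sqrt{Z_{t_1}}\sqrt{Z_{t_2}}\int_0^{t_1\wedge t_2}\psi_{h,t_1}\psi_{h,t_2}\,dh\,dt_1dt_2=\frac{T^2}{k^2}\|D\ws^2\|^2,
$$
is a strictly positive random variable that does not depend on $h$ or $t$, so that $\Psi_{h,t}=\psi_{h,t}/G$. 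Since $4b>k^2$ under the hypothesis $6k^2<b$, the exponent defining $\psi_{h,t}$ is nonpositive for $h\le t$, whence $0<\psi_{h,t}\le 1$; this boundedness will be used repeatedly.

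Next I would bound $G$ from below exactly as the double integral was bounded in \eqref{tau1}. On the event $\{t<\widetilde{\tau}\}$ the process $Z$ stays in $(Z_0/2,3Z_0/2)$, so $\sqrt{Z_t}\ge\sqrt{Z_0/2}$ and $\int_h^t Z_s^{-1}\,ds\le 2(t-h)/Z_0$, which gives $\psi_{h,t}\ge \operatorname{e}^{-CT}$ for $h\le t\le\widetilde{\tau}_1$. Restricting the outer integrals in $G$ to $[0,\widetilde{\tau}_1]^2$ and using $\int_0^a\int_0^a(t_1\wedge t_2)\,dt_1dt_2=a^3/3$ then yields
$$
G\ge C\,\widetilde{\tau}_1^{\,3},\qquad\text{hence}\qquad G^{-1}\le C\,\widetilde{\tau}_1^{-3}.
$$
By Lemma \ref{lemlem} the random variable $\widetilde{\tau}_1$ has finite negative moments of every order, so $G^{-1}$ lies in every $L^p(\Omega)$.

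Then I would compute the stochastic derivatives. From $D_sZ_{s'}=k\psi_{s,s'}\sqrt{Z_{s'}}1_{\{s<s'\}}$ and the chain rule one gets
\begin{equation*}
D_s\psi_{h,t}=\Big(\tfrac{b}{2}-\tfrac{k^2}{8}\Big)k\,\psi_{h,t}\int_{h\vee s}^{t}\frac{\psi_{s,s'}}{Z_{s'}^{3/2}}\,ds',\qquad D_s\sqrt{Z_t}=\tfrac{k}{2}\,\psi_{s,t}1_{\{s<t\}},
\end{equation*}
and $D_sG$ is obtained by differentiating the integrand of $G$, producing terms of the same two types. Using $\psi\le 1$ and Cauchy--Schwarz, every resulting factor is dominated by expressions of the form $Z_t^{\,a}$ with $a>0$, by $\int_0^T Z_{s'}^{-3}\,ds'$, and by powers of $G^{-1}\le C\widetilde{\tau}_1^{-3}$; the worst negative power of $Z$ that occurs is $Z^{-3}$. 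This is precisely where $6k^2<b$ enters: it gives $2b/k^2>12$, so by Remark \ref{moment} the quantities $\sup_{[0,T]}\mathbb{E}Z_t^{\,p}$ are finite for every $p>-12$, leaving room to absorb $Z^{-3}$ by Hölder's inequality while the arbitrarily high negative powers of $\widetilde{\tau}_1$ coming from $G$ are harmless. Carrying this out yields constants such that, uniformly in $t\in[0,T]$,
\begin{equation}\label{sup1}
\mathbb{E}\big(Z_t\,\Psi_{h,t}^2\big)\le C,
\end{equation}
\begin{equation}\label{sup2}
\mathbb{E}\big(D_s(\sqrt{Z_t}\,\Psi_{h,t})\big)^2\le C.
\end{equation}

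Finally, integrating \eqref{sup1} in $h$ and \eqref{sup2} in $(s,h)$ over $[0,T]$ shows that for each fixed $t$ the $L^{1,2}$ norm of $h\mapsto\sqrt{Z_t}\,\Psi_{h,t}$ is finite, which proves the first assertion. For the second assertion I would argue as in Lemma \ref{ZetaIntSkor}: by Cauchy--Schwarz the $L^{1,2}$ norm of $h\mapsto\int_h^T\sqrt{Z_t}\,\Psi_{h,t}\,dt$ is controlled by $T\int_0^T$ of the corresponding integrands, so the finiteness just established for $\sqrt{Z_t}\,\Psi_{h,t}$ transfers to $\int_h^T\sqrt{Z_t}\,\Psi_{h,t}\,dt$. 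The main obstacle is the estimation of $D_sG$ in \eqref{sup2}: its square carries both the negative power $Z^{-3}$ and high negative powers of $\widetilde{\tau}_1$, and it is the Hölder balancing of these that forces the strengthened condition $6k^2<b$.
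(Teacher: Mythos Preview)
Your proposal follows the paper's proof essentially step for step: bound the denominator $G$ from below by $C\,\widetilde{\tau}_1^{\,3}$ via the hitting time of the interval $[Z_0/2,3Z_0/2]$, exploit $0<\psi_{h,t}\le 1$, compute the Malliavin derivative of $\sqrt{Z_t}\Psi_{h,t}$ by the chain rule, and then control second moments by repeated H\"older inequalities combined with Lemma~\ref{lemlem} and Remark~\ref{moment}. Your derivation of the second assertion from the first via Cauchy--Schwarz in $t$ is exactly the argument of Lemma~\ref{ZetaIntSkor} transplanted to the CIR setting.

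The one place where your sketch is noticeably looser than the paper is the accounting that produces the threshold $6k^2<b$. You write that ``the worst negative power of $Z$ that occurs is $Z^{-3}$'' and that $2b/k^2>12$ ``leaves room'' for H\"older. The paper is more explicit: after squaring $D_l(\sqrt{Z_t}\Psi_{h,t})$ and separating $G^{-2}$, $G^{-4}$ and the remaining factors by Cauchy--Schwarz, the term containing $\sqrt{Z_t}\int_h^t Z_s^{-3/2}\,ds$ must be raised to the eighth power, and a further H\"older splitting (with conjugate exponents $p,d>1$) forces the moment $\mathbb{E}Z_s^{-12d}$ for some $d>1$; this is exactly why the \emph{strict} inequality $12<2b/k^2$ is the sharp requirement. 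Your heuristic is consistent with this computation but does not by itself pin down the exponent $12$. In substance, however, there is no gap: once you actually carry out the H\"older cascade you will recover the paper's estimates \eqref{sup1}--\eqref{sup2} verbatim.
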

\begin{proof}
We prove the second statement of the lemma which requires more transforms. The proof of first statement is similar.It is necessary to show that
\begin{equation}\begin{gathered}\label{eq_norma}
\left\|\int_h^{T}\sqrt{Z_t}\Psi_{h,t}dt\right\|^2_{L^{1,2}}=\mathbb{E}\left[\int_0^T\left(\int_h^{T}\sqrt{Z_t}\Psi_{h,t}dt\right)^2dh\right.\\
\left.+\int_0^T\int_0^T \left(D_l \int_h^{T}\sqrt{Z_t}\Psi_{h,t}dt\right)^2 dldh\right]< \infty.
\end{gathered}\end{equation}
Consider the function
$$
\sqrt{Z_t}\Psi_{h,t}=\sqrt{Z_t}\psi_{h,t}\left[\int_{0}^T\int_{0}^T \sqrt{Z_{t_1}}\sqrt{Z_{t_2}}\int_0^{t_1\wedge
t_2}\psi_{h,t_1}\psi_{h,t_2}dh\,dt_1dt_2\right]^{-1}.\\
$$
 We need to find an estimate for the lower bound of the denominator of this function. Denote
\begin{multline*}
I:=\int_{0}^T\int_{0}^T \sqrt{Z_{t_1}}\sqrt{Z_{t_2}}\int_0^{t_1\wedge t_2}\psi_{h,t_1}\psi_{h,t_2}dh\,dt_1dt_2\\
=\int_{0}^T\int_{0}^T \sqrt{Z_{t_1}}\sqrt{Z_{t_2}}\int_0^{t_1\wedge
t_2}\exp\left\{-\frac{t_1-h}{2}-\left(\frac{b}{2}-\frac{k^2}{8}\right)\int_h^{t_1}\frac{ds}{Z_s}\right\}\\
\times\exp\left\{-\frac{t_2-h}{2}-\left(\frac{b}{2}-\frac{k^2}{8}\right)\int_h^{t_2}\frac{ds}{Z_s}\right\}dh\,dt_1dt_2.
\end{multline*}

Recall that $\widetilde{\tau}:=\inf\left\{t:|Z_t-Z_0|>\frac{Z_0}{2}\right\}$, $\widetilde{\tau}_1=\widetilde{\tau}\wedge T$. Denote by $q:=\frac{b}{2}-\frac{k^2}{8}$. Conditions of the lemma yield that $q>0$. We get
\begin{multline*}
I\ge\frac{Z_0}{2}\int_{0}^{\widetilde{\tau}_1}\int_{0}^{\widetilde{\tau}_1} \int_0^{t_1\wedge
t_2}\exp\left\{-\frac{t_1-h}{2}-\frac{2q}{Z_0}\int_h^{t_1}ds\right\}\\
\times\exp\left\{-\frac{t_2-h}{2}-\frac{2q}{Z_0}\int_h^{t_2}ds\right\}dh\,dt_1dt_2
\end{multline*}
\begin{multline*}
=\frac{Z_0}{2}\int_{0}^{\widetilde{\tau}_1}\int_{0}^{\widetilde{\tau}_1} \int_0^{t_1\wedge t_2}\exp\left\{-\frac{t_1-h}{2}-\frac{2q}{Z_0}(t_1-h)\right\}\\
\times\exp\left\{-\frac{t_2-h}{2}-\frac{2q}{Z_0}(t_2-h)\right\}dh\,dt_1dt_2.
\end{multline*}
As $t_1-h<T$ and $t_2-h<T$ the following inequality holds
$$
I\ge\frac{Z_0}{12}\exp\left\{-\frac{T(Z_0+4q)}{Z_0}\right\}\widetilde{\tau}_1^3.
$$
Below we denote unimportant constants by $C$ or $C$ with indexes. Notice that the function $\psi_{h,t}$ is bounded from above by 1. Hence
$$
I\ge C\widetilde{\tau}_1^3, \;\;\text{i}\;\; \Psi_{h,t}\leq C\widetilde{\tau}_1^{-3}. $$

  Then $$
\mathbb{E}({Z_t}\Psi^2_{h,t})\leq \left(\mathbb{E}Z_t^2\mathbb{E}\Psi^2_{h,t}\right)^ {\frac{1}{2}}\leq C\left(\mathbb{E}Z_t^2\right)^{\frac{1}{2}}\left(\mathbb{E}\widetilde{\tau}_1^{-6}\right)^{\frac{1}{2}}.$$
The conditions of the lemma and remark \ref{moment} provide that  $\sup_{t\in[0,T]}\mathbb{E}Z_t^p<\infty$ for any
$p\ge -12$. Boundedness of $\mathbb{E}\widetilde{\tau}_1^{-6}$ is provided by Lemma \ref{lemlem}. Then there exists a constant $C>0$ such that
\begin{equation}
\label{sup1}\sup_{t,h\in[0,T]}\mathbb{E}({Z_t}\Psi^2_{h,t})\leq C.
\end{equation}
The first summand from the right-hand side of equality \eqref{eq_norma} is bounded. Really, \begin{equation*}\begin{gathered}
\mathbb{E}\left[\int_0^T\left(\int_h^{T}\sqrt{Z_t}\Psi_{h,t}dt\right)^2dh\right]
\le T    \int_0^T \int_h^{T}\mathbb{E} (Z_t \Psi^2_{h,t})dt  dh \leq C_1.
\end{gathered}\end{equation*}

Let us show that the second summand from the right-hand side of equality \eqref{eq_norma} is bounded/
\begin{equation}\begin{gathered}\label{eq_norma_2}
\mathbb{E}\left[\int_0^T\int_0^T \left(D_l \int_h^{T}\sqrt{Z_t}\Psi_{h,t}dt\right)^2 dldh\right]\\
=\mathbb{E}\left[\int_0^T\int_0^T \left( \int_h^{T}D_l(\sqrt{Z_t}\Psi_{h,t})dt\right)^2\right]dldh\\
\le T\int_0^T\int_0^T   \int_h^{T}\mathbb{E}(D_l(\sqrt{Z_t}\Psi_{h,t}))^2 dt\,  dl\,dh.
\end{gathered}\end{equation}
It is necessary to find the expression for the stochastic derivative. Using the chain rule we get:
\begin{equation*}\begin{gathered}
D_l(\sqrt{Z_t}\Psi_{h,t})=D_l\left(\sqrt{Z_t}\psi_{h,t}\left[\int_{0}^T\int_{0}^T \sqrt{Z_{t_1}}\sqrt{Z_{t_2}}\int_0^{t_1\wedge
t_2}\psi_{h,t_1}\psi_{h,t_2}dh\,dt_1dt_2\right]^{-1}\right)
\end{gathered}\end{equation*}
\begin{multline*}
=D_l\left(\sqrt{Z_t}\exp\left\{-\frac{t-h}{2}-q\int_h^t\frac{ds}{Z_s}\right\}\times\right.\\
\left.\times\left[\int_{0}^T\int_{0}^T \int_0^{t_1\wedge
t_2}\sqrt{Z_{t_1}}\sqrt{Z_{t_2}}\exp\left\{-\frac{t_1-h}{2}-q\int_h^{t_1}\frac{ds}{Z_s}\right\}\times\right.\right.\\
\left.\left.\times\exp\left\{-\frac{t_2-h}{2}-q\int_h^{t_2}\frac{ds}{Z_s}\right\}dh\,dt_1dt_2\right]^{-1}\right)\\
 =D_l\left(\sqrt{Z_t}\exp\left\{-\frac{t-h}{2}-q\int_h^t\frac{ds}{Z_s}\right\}\right)\\
\times\left[\int_{0}^T\int_{0}^T \int_0^{t_1\wedge
t_2}\sqrt{Z_{t_1}}\sqrt{Z_{t_2}}\exp\left\{-\frac{t_1-h}{2}-q\int_h^{t_1}\frac{ds}{Z_s}\right\}\times\right.\\
\left.\times\exp\left\{-\frac{t_2-h}{2}-q\int_h^{t_2}\frac{ds}{Z_s}\right\}dh\,dt_1dt_2\right]^{-1}
\end{multline*}
\begin{multline*}
+\sqrt{Z_t}\exp\left\{-\frac{t-h}{2}-q\int_h^t\frac{ds}{Z_s}\right\}\\
\times D_l\left[\int_{0}^T\int_{0}^T \int_0^{t_1\wedge
t_2}\sqrt{Z_{t_1}}\sqrt{Z_{t_2}}\exp\left\{-\frac{t_1-h}{2}-q\int_h^{t_1}\frac{ds}{Z_s}\right\}\times\right.\\
\left.\times\exp\left\{-\frac{t_2-h}{2}-q\int_h^{t_2}\frac{ds}{Z_s}\right\}dh\,dt_1dt_2\right]^{-1}
\end{multline*}
\begin{multline*}
=\exp\left\{-\frac{t-h}{2}-q\int_h^t\frac{ds}{Z_s}\right\}\left(\frac{D_l Z_t}{2\sqrt{Z_t}}+q\sqrt{Z_t}\int_h^t\frac{D_l Z_s}{Z_s^2}ds\right)\\
\times\left[\int_{0}^T\int_{0}^T \int_0^{t_1\wedge
t_2}\sqrt{Z_{t_1}}\sqrt{Z_{t_2}}\exp\left\{-\frac{t_1-h}{2}-q\int_h^{t_1}\frac{ds}{Z_s}\right\}\times\right.\\
\left.\times\exp\left\{-\frac{t_2-h}{2}-q\int_h^{t_2}\frac{ds}{Z_s}\right\}dh\,dt_1dt_2\right]^{-1}
\end{multline*}
\begin{multline*}
-\sqrt{Z_t}\exp\left\{-\frac{t-h}{2}-q\int_h^t\frac{ds}{Z_s}\right\}\\
\times\left[\int_{0}^T\int_{0}^T \int_0^{t_1\wedge
t_2}\sqrt{Z_{t_1}}\sqrt{Z_{t_2}}\exp\left\{-\frac{t_1-h}{2}-q\int_h^{t_1}\frac{ds}{Z_s}\right\}\times\right.\\
\left.\times\exp\left\{-\frac{t_2-h}{2}-q\int_h^{t_2}\frac{ds}{Z_s}\right\}dh\,dt_1dt_2\right]^{-2}\times
\end{multline*}
\begin{multline*}
\times\int_{0}^T\int_{0}^T \int_0^{t_1\wedge
t_2}\exp\left\{-\frac{t_1-h}{2}-q\int_h^{t_1}\frac{ds}{Z_s}\right\}\exp\left\{-\frac{t_2-h}{2}-q\int_h^{t_2}\frac{ds}{Z_s}\right\}\times\\
\times \left(\sqrt{Z_{t_2}}\left(\frac{D_lZ_{t_1}}{2\sqrt{Z_{t_1}}}+q\sqrt{Z_{t_1}}\int_h^{t_1}\frac{D_lZ_{s}}{Z_s^2}ds\right)+\right.\\
\left.+\sqrt{Z_{t_1}}\left(\frac{D_lZ_{t_2}}{2\sqrt{Z_{t_2}}}+q\sqrt{Z_{t_2}}\int_h^{t_2}\frac{D_lZ_{s}}{Z_s^2}ds\right) \right)dh\,dt_1dt_2.
\end{multline*}
Taking into account the form of $D_l Z_t$, $\psi_{h,t}$ and $I$ we arrive at
\begin{multline*}
D_l(\sqrt{Z_t}\Psi_{h,t})=k\psi_{h,t}\left(\frac{\psi_{l,t}}{2}+q\sqrt{Z_t}\int_h^t\frac{\psi_{l,s}}{Z_s^{\frac{3}{2}}}ds\right)I^{-1}\\
-k\sqrt{Z_t}\psi_{h,t}I^{-2}\int_{0}^T\int_{0}^T \int_0^{t_1\wedge
t_2}\psi_{h,t_1}\psi_{h,t_2}\left(\sqrt{Z_{t_2}}\left(\frac{\psi_{l,t_1}}{2}+q\sqrt{Z_{t_1}}\int_h^{t_1}\frac{\psi_{l,s}}{Z_s^{\frac{3}{2}}}ds\right)+\right.\\
\left.+\sqrt{Z_{t_1}}\left(\frac{\psi_{l,t_2}}{2}+q\sqrt{Z_{t_2}}\int_h^{t_2}\frac{\psi_{l,s}}{Z_s^2}ds\right) \right)dh\,dt_1dt_2.
\end{multline*}
Taking into account \eqref{eq_norma_2} in order to prove the boundedness of the second summand in \eqref{eq_norma}it suffices to show that
\begin{equation}\label{sup2}\sup\limits_{l,h,t}
\mathbb{E}\left[(D_l(\sqrt{Z_t}\Psi_{h,t}))^2\right]<\infty.
\end{equation}
To this end we consider
\begin{multline*}
\mathbb{E}\left[(D_l(\sqrt{Z_t}\Psi_{h,t}))^2\right]=\mathbb{E}\left[\left(k\psi_{h,t}\left(\frac{\psi_{l,t}}{2}+q\sqrt{Z_t}\int_h^t\frac{\psi_{l,s}}{Z_s^{\frac{3}{2}}}ds\right)I^{-1}\right.\right.\\
-k\sqrt{Z_t}\psi_{h,t}I^{-2}\int_{0}^T\int_{0}^T \int_0^{t_1\wedge
t_2}\psi_{h,t_1}\psi_{h,t_2}\left(\sqrt{Z_{t_2}}\left(\frac{\psi_{l,t_1}}{2}+q\sqrt{Z_{t_1}}\int_h^{t_1}\frac{\psi_{l,s}}{Z_s^{\frac{3}{2}}}ds\right)+\right.\\
\left.\left.\left.+\sqrt{Z_{t_1}}\left(\frac{\psi_{l,t_2}}{2}+q\sqrt{Z_{t_2}}\int_h^{t_2}\frac{\psi_{l,s}}{Z_s^2}ds\right) \right)dh\,dt_1dt_2\right)^2\right]
\end{multline*}
\begin{multline*}
\le2k^2\mathbb{E}\left[\left(\psi_{h,t}\left(\frac{\psi_{l,t}}{2}+q\sqrt{Z_t}\int_h^t\frac{\psi_{l,s}}{Z_s^{\frac{3}{2}}}ds\right)I^{-1}\right)^2\right]\\
+2k^2\mathbb{E}\left[\left(\sqrt{Z_t}\psi_{h,t}I^{-2}\int_{0}^T\int_{0}^T \int_0^{t_1\wedge
t_2}\psi_{h,t_1}\psi_{h,t_2}\left(\sqrt{Z_{t_2}}\left(\frac{\psi_{l,t_1}}{2}+q\sqrt{Z_{t_1}}\int_h^{t_1}\frac{\psi_{l,s}}{Z_s^{\frac{3}{2}}}ds\right)+\right.\right.\right.\\
\left.\left.\left.+\sqrt{Z_{t_1}}\left(\frac{\psi_{l,t_2}}{2}+q\sqrt{Z_{t_2}}\int_h^{t_2}\frac{\psi_{l,s}}{Z_s^2}ds\right) \right)dh\,dt_1dt_2\right)^2\right].
\end{multline*}
Hence, provided by the inequality $\psi_{h,t}<1$, we get:
\begin{multline*}\label{eq_33}
\mathbb{E}\left[(D_l(\sqrt{Z_t}\Psi_{h,t}))^2\right] \le
2k^2\mathbb{E}\left[\left(\frac{1}{2}+q\sqrt{Z_t}\int_h^t\frac{ds}{Z_s^{\frac{3}{2}}}\right)^2I^{-2}\right]\\
+2k^2\mathbb{E}\left[\left(\sqrt{Z_t}I^{-2}\int_{0}^T\int_{0}^T \int_0^{t_1\wedge
t_2}\left(\sqrt{Z_{t_2}}\left(\frac{1}{2}+q\sqrt{Z_{t_1}}\int_h^{t_1}\frac{ds}{Z_s^{\frac{3}{2}}}\right)+\right.\right.\right.\\
\left.\left.\left.+\sqrt{Z_{t_1}}\left(\frac{1}{2}+q\sqrt{Z_{t_2}}\int_h^{t_2}\frac{ds}{Z_s^2}\right) \right)dh\,dt_1dt_2 \right)^2 \right]=2k^2I_1+2k^2I_2.
\end{multline*}
We estimate each  expectation separately. By H\"{o}lder's inequality and the inequality $(a+b)^n\le 2^{n-1}(a^n+b^n)$ we get
\begin{multline*}
I_1:=\mathbb{E}\left[\left(\frac{1}{2}+q\sqrt{Z_t}\int_h^t\frac{ds}{Z_s^{\frac{3}{2}}}\right)^2 I^{-2}\right]\le
\left(\mathbb{E}\left[\left(\frac{1}{2}+q\sqrt{Z_t}\int_h^t\frac{ds}{Z_s^{\frac{3}{2}}}\right)^4\right]\right)^{1/2}\left(\mathbb{E}I^{-4}\right)^{1/2}\\
\le
C_1\left(\frac{1}{2}+8q^4\mathbb{E}\left[\left(\sqrt{Z_t}\int_h^t\frac{ds}{Z_s^{\frac{3}{2}}}\right)^4\right]\right)^{\frac{1}{2}}\left(\mathbb{E}\widetilde{\tau}_1^{-12}\right)^{\frac{1}{2}}.
\end{multline*}
The boundedness of $\mathbb{E}\widetilde{\tau}_1^{-12}$ is provided by Lemma  \ref{lemlem}. Now we need to estimate

\begin{gather*}
\mathbb{E}\left[\left(\sqrt{Z_t}\int_h^t\frac{ds}{Z_s^{\frac{3}{2}}}\right)^4\right]\le\left(\mathbb{E}Z_t^4\right)^\frac{1}{2}\left(\mathbb{E}\left[\left(\int_h^t\frac{ds}{Z_s^{\frac{3}{2}}}\right)^8\right]\right)^\frac{1}{2}\\
\le T^{\frac{7}{2}}\left(\mathbb{E}Z_t^4\right)^\frac{1}{2}\left(\int_h^t\mathbb{E}Z_s^{-12}ds\right)^\frac{1}{2},
\end{gather*}
Taking to account Remark \ref{moment} and the conditions of the lemma we arrive at
$$\mathbb{E}\left[\left(\sqrt{Z_t}\int_h^t\frac{ds}{Z_s^{\frac{3}{2}}}\right)^4\right]\leq C.$$ And this means that $I_1\leq C$.
The next estimate is for $I_2$.
\begin{multline*}
I_2:=\mathbb{E}\left[\left(\sqrt{Z_t}I^{-2}\int_{0}^T\int_{0}^T \int_0^{t_1\wedge
t_2}\left(\sqrt{Z_{t_2}}\left(\frac{1}{2}+q\sqrt{Z_{t_1}}\int_h^{t_1}\frac{ds}{Z_s^{\frac{3}{2}}}\right)+\right.\right.\right.\\
\left.\left.\left.+\sqrt{Z_{t_1}}\left(\frac{1}{2}+q\sqrt{Z_{t_2}}\int_h^{t_2}\frac{ds}{Z_s^2}\right) \right)dh\,dt_1dt_2\right)^2\right]
\end{multline*}
\begin{multline*}
\le\left(\mathbb{E}(Z_t^2I^{-8})\right)^{\frac{1}{2}}\left(\mathbb{E}\left[\left(\int_{0}^T\int_{0}^T \int_0^{t_1\wedge
t_2}\left(\sqrt{Z_{t_2}}\left(\frac{1}{2}+q\sqrt{Z_{t_1}}\int_h^{t_1}\frac{ds}{Z_s^{\frac{3}{2}}}\right)+\right.\right.\right.\right.\\
\left.\left.\left.\left.+\sqrt{Z_{t_1}}\left(\frac{1}{2}+q\sqrt{Z_{t_2}}\int_h^{t_2}\frac{ds}{Z_s^2}\right)
\right)dh\,dt_1dt_2\right)^4\right]\right)^{\frac{1}{2}}
\end{multline*}
\begin{multline*}
\le T^{\frac{9}{2}}\left(\mathbb{E}Z_t^4\right)^\frac{1}{4}\left(\mathbb{E}I^{-16}\right)^\frac{1}{4}\left(\int_{0}^T\int_{0}^T \int_0^{t_1\wedge
t_2}\mathbb{E}\left[\left(\sqrt{Z_{t_2}}\left(\frac{1}{2}+q\sqrt{Z_{t_1}}\int_h^{t_1}\frac{ds}{Z_s^{\frac{3}{2}}}\right)+\right.\right.\right.\\
\left.\left.\left.+\sqrt{Z_{t_1}}\left(\frac{1}{2}+q\sqrt{Z_{t_2}}\int_h^{t_2}\frac{ds}{Z_s^2}\right)\right)^4 \right]dh\,dt_1dt_2\right)^{\frac{1}{2}}
\end{multline*}
\begin{multline*}
\le C_2\left(\mathbb{E}Z_t^4\right)^\frac{1}{4}\left(\mathbb{E}\widetilde{\tau}_1^{-48}\right)^\frac{1}{4}\left(\int_{0}^T\int_{0}^T \int_0^{t_1\wedge
t_2}\mathbb{E}\left[\left(\sqrt{Z_{t_2}}\left(\frac{1}{2}+q\sqrt{Z_{t_1}}\int_h^{t_1}\frac{ds}{Z_s^{\frac{3}{2}}}\right)+\right.\right.\right.\\
\left.\left.\left.+\sqrt{Z_{t_1}}\left(\frac{1}{2}+q\sqrt{Z_{t_2}}\int_h^{t_2}\frac{ds}{Z_s^2}\right)\right)^4 \right]dh\,dt_1dt_2\right)^{\frac{1}{2}},
\end{multline*}
where $C_2=\frac{12^4T^{\frac{9}{2}}}{Z_0^4}\exp\left\{\frac{4T(Z_0+4q)}{Z_0}\right\}$. The boundedness of  $\mathbb{E}\widetilde{\tau}_1^{-48}$ and $\mathbb{E}Z_t^4$  are provided by Remark \ref{moment}, Lemma \ref{lemlem} and the conditions of the lemma. In order to prove the boundedness of \\

$\mathbb{E}\left[\left(\sqrt{Z_{t_2}}\left(\frac{1}{2}+q\sqrt{Z_{t_1}}\int_h^{t_1}\frac{ds}{Z_s^{\frac{3}{2}}}\right)+\sqrt{Z_{t_1}}\left(\frac{1}{2}+q\sqrt{Z_{t_2}}\int_h^{t_2}\frac{ds}{Z_s^2}\right)
\right)^4\right]$
it suffices to show that the following expression is bounded:
\begin{multline*}
\mathbb{E}\left[\left(\sqrt{Z_{t}}\left(\frac{1}{2}+q\sqrt{Z_{t}}\int_h^{t}\frac{ds}{Z_s^{\frac{3}{2}}}\right)\right)^4\right]
\le
\left(\mathbb{E}Z_t^4\right)^{\frac{1}{2}}\left(\mathbb{E}\left[\left(\frac{1}{2}+q\sqrt{Z_{t}}\int_h^{t}\frac{ds}{Z_s^{\frac{3}{2}}}\right)^8\right]\right)^{\frac{1}{2}}\\
\le\left(\mathbb{E}Z_t^4\right)^{\frac{1}{2}}\left(\frac{1}{2}+128\,q^8\,\mathbb{E}\left[\left(\sqrt{Z_{t}}\int_h^{t}\frac{ds}{Z_s^{\frac{3}{2}}}\right)^8\right]\right)^{\frac{1}{2}}.
\end{multline*}
By the assumption of the lemma $6k^2<b$. Then we may choose $1<d<\frac{b}{6k^2}$. Then $-12d>-\frac{2b}{k^2}$ and according to Remark \ref{moment},
$$\sup\limits_{t\in[0,T]}\mathbb{E}Z^{-12d}_t<\infty.$$
Let $p$ is such that $\frac{1}{p}+\frac{1}{d}=1$. We apply H\"{o}lder's inequality twice to arrive at
\begin{equation*}\begin{gathered}
\mathbb{E}\left[\left(\sqrt{Z_{t}}\int_h^{t}\frac{ds}{Z_s^{\frac{3}{2}}}\right)^8\right]\le\left(\mathbb{E}Z_t^{4p}\right)^{\frac{1}{p}}\left(\mathbb{E}\left[\left(\int_h^{t}\frac{ds}{Z_s^{\frac{3}{2}}}\right)^{8d}\right]\right)^{\frac{1}{d}}\\
\le T^{\frac{8d-1}{d}}\left(\mathbb{E}Z_t^{4p}\right)^{\frac{1}{p}}\left(\int_h^{t}\mathbb{E}Z_s^{-12d}ds\right)^{\frac{1}{d}}<\infty.
\end{gathered}\end{equation*}

 Thus, we have shown that $I_2\leq C$. And this means that \eqref{sup2} holds and consequently \eqref{eq_norma} is satisfied.
 Lemma is proved.
\end{proof}

\textbf{The authors are grateful to Oleksii Kulyk for valuable comments and suggestions that have significantly improved the paper and in particular for detailed explanation of application of Malliavin calculus to the problem of finding the form of probability density functions of integral functionals.}
\selectlanguage{english}

\end{document}